\newtheorem{property}[theorem]{Property}
\Crefname{property}{Property}{Properties}
\Crefname{section}{Sec.}{Secs.}
\Crefname{figure}{Fig.}{Figs.}
\title{What does it take to certify a conversion checker?} %TODO Please add
\author{Meven Lennon-Bertrand}{University of Cambridge, United Kingdom}{Meven.Lennon-Bertrand@cl.cam.ac.uk}{https://orcid.org/0000-0002-7079-8826}{}%TODO mandatory, please use full name; only 1 author per \author macro; first two parameters are mandatory, other parameters can be empty. Please provide at least the name of the affiliation and the country. The full address is optional. Use additional curly braces to indicate the correct name splitting when the last name consists of multiple name parts.
\authorrunning{M. Lennon-Bertrand} %TODO mandatory. First: Use abbreviated first/middle names. Second (only in severe cases): Use first author plus 'et al.'
\keywords{Dependent types, Bidirectional typing, Certified software} %TODO mandatory; please add comma-separated list of keywords
\begin{document}

\maketitle

%TODO mandatory: add short abstract of the document
\begin{abstract}
We report on a detailed exploration of the properties of conversion (definitional equality) in dependent type theory, with the goal of certifying
decision procedures for it. While in that context the property of normalisation
has attracted the most light, we instead emphasize the importance of \emph{injectivity}
properties, showing that they alone are both crucial and
sufficient to certify most desirable properties of conversion checkers.
We also explore the certification of a fully untyped conversion checker, with
respect to a typed specification, and show that the story is mostly unchanged,
although the exact injectivity properties needed are subtly different.
\end{abstract}

\section{Introduction}
\label{sec:intro}

\subparagraph{Certifying the certifier}

Proof assistants are more and more popular tools, that have now been used to formally
certify many critical software systems and libraries.
The trust in these relies on the proof assistant, which itself
generally concentrates trust on a \emph{kernel}, a well-delimited
and relatively simple piece of the system,
in charge of (re)checking all proofs generated outside it.
This so-called de Bruijn criterion \cite{Barendregt2001}
is a first step toward safety. However, how can we make the kernel, the
keystone of the whole ecosystem, trustworthy?

We should certify the kernel itself! Since it is small and
well specified, this surely should not be hard? In higher order
logic, \textsc{Candle}~\cite{Abrahamsson2022} already
provides such a certified kernel for \textsc{HOL Light}.
An equivalent achievement for dependent type theory is however still missing,
despite 30 years of efforts since Barras \cite{Barras1997}, and the
recent \Agdac~\cite{Liesnikov2025}, \Leanf~\cite{Carneiro2024} and
\MetaCoq~\cite{MetaCoq2024} projects.
The reason is that kernels for dependent type theories,
although relatively simple, rely on many invariants.
Establishing these invariants requires many
properties of the type system that the kernel is supposed to decide.

% \subparagraph{Recovering from incompleteness}
Yet, showing these properties is a tremendous undertaking,%
\footnote{\MetaCoq has already exceeded the 200kLoC,
and its meta-theory is far from exhaustive.}
when not altogether impossible: due to Gödel's
second incompleteness theorem, any property implying a system's consistency
cannot be proven in the system itself. This applies in particular to
termination of the kernel, which relies on the infamous \emph{normalisation}
property, which in turn implies consistency.
The full certification of a dependently-typed proof assistant
in itself is thus unfeasible. % However, this is not the end of the story.
% we wish to undertake a subtler investigation, and understand what
But what can be salvaged?
% What can be proven about a kernel using less powerful properties than normalisation?
We should aim for graceful failure: even if we cannot reach normalisation,
we would like to certify \emph{something} about our type checker, and identify
the key \emph{meta-theoretic} properties of the type system necessary for this.

% \subparagraph{Meta-theory as a black box}

By clearly divorcing the meta-theory, taken as an opaque toolbox,
from the certification of the algorithm, we
obtain a good separation of concerns. This should help
in getting bolder on \emph{both} sides: we can modularly adopt advanced proof techniques
on the meta-theory side \cite{Sterling2021a,Bocquet2023},
and independently aim for the certification of complex, optimised implementations on the
certification one \cite{Courant2024,Kovacs2024}.
For this, it is crucial that the meta-theory is not tailored to a particular
implementation, and vice-versa.

Let us also emphasise that, while a fully certified kernel
proves that typing is decidable,
mere decidability is in this view not central: we care about
the actual content of the typing algorithm(s).
This is even clearer in case we cannot ---or do not wish to---
prove normalisation, and therefore decidability.
We still want to talk about a typing algorithm,
which cannot be extracted from a decidability proof that does not exist.
% , but nonetheless provides a semi-decision procedure.

% \paragraph*{Separating the meta-theory from the certification}
% \begin{itemize}
%   \item moreover, we want this to be as separated as possible from the
%     meta-theory: two very different topics
%   \item open the way for optimised implementations
%   \item ideally, this should have no impact at all on the "meta-theory" bit
%   \item we don't want to re-do a logical relation each time we oh so slightly
%     change the algo
% \end{itemize}

% \paragraph*{The different algorithms: typed and untyped}
% \begin{itemize}
%   \item the general set-up: traditional, abstract-machine based algorithms -- no NbE
%   \item present the differences
%   \item mention that even if the \emph{algorithm} is untyped, the specification
%     still is/should be typed
%   \item untyped is an \emph{implementation detail}
%   \item more advanced, optimised algorithms: Courant, etc.
% \end{itemize}

\subparagraph{Injectivity}

As often with dependent types, the main source of meta-theoretic complications
is conversion, the equational theory up to
which types are compared, which we write \(\convop\).%
% Our \emph{specification} for conversion is a typed judgment, written
% \(\conv{\Gamma}{A}{A'}\) for types and \(\conv{\Gamma}{t}{t'}[A]\) for terms.%
\footnote{This is a typed judgment, although
we sometimes omit the context and/or type in informal explanations.}
Out of the many properties that can be demanded of it,
% While normalisation is certainly an important theorem,
% the main properties of conversion
we want to put forward \emph{injectivity}
---as well as the accompanying \emph{no-confusion}.
For the \(\P\) type constructor,
injectivity says that if \(\P x : A.B \convop \P x : A'. B'\),
then \(A \convop A'\) and \(B \convop B'\), while no-confusion says,
for instance, that \(\P x : A.B \convop \nat\) is impossible.
These cannot be shown by mere inversion of the conversion derivation,
which can go through a long chain of transitivities.
%, going through types which might not be \(\P\)-types. Indeed,
Injectivity shortcuts this, saying that, despite transitivity,
certain congruences are invertible.
This is a key aspect of intentional type theories, whose conversion is sufficiently
restricted ---injectivity of type constructors
fails in extensional type theories \cite{Winterhalter2020}.
% those for which we expect to be able to build good type checkers,
% precisely because these deeply rely on invertibility of congruences.

% Injectivity of type constructors appears constantly
% when manipulating typing derivations, in particular to establish the invariants on which
% the conversion checker relies. More injectivity properties for terms
% are also an important step into certifying said conversion checker, both for
% \emph{term constructors} ---saying \eg that if \(\succ(n) \convop \succ(n') \ty \nat\),
% where \(\succ\) is the successor constructor of \(\nat\) then \(n \convop n' \ty \nat\)---
% and for \emph{neutral eliminators} ---\eg that if
% \(x~t \convop y~t'\), with \(x\) and \(y\) variables, then \(x \convop y\) and \(t \convop t'\).

What makes injectivity attractive is that despite its importance
it does not imply logical consistency ---although no-confusion implies
\emph{equational consistency}: not all equations hold. Consequently,
and in contrast with normalisation and its Gödelian limitations,
it can be proven in a meta-theory weaker than the object theory
\cite{Takahashi1995,MetaCoq2024,Coquand2018a}.

\subparagraph{Untyped conversion}
To better understand the relation between meta-theory and
checker, we actually explore two conversion checking algorithms.
The first is a type-directed one, similar to that already
certified by Adjedj \etal\ \cite{Adjedj2024}, and Abel \etal\ \cite{Abel2017} before them.
Our main addition here is to more finely decompose this certification.
However, this algorithm is not really faithful to the kernels
of proof assistants: in \Coq and \Lean, conversion does not
maintain any type information, and % and is entirely directed by the terms.
\Agda's conversion, while primarily type-directed,
similarly uses term-directed \(\eta\)-expansion of functions.
% like in \Coq or \Lean,
% out of performance concerns.
We thus attack the novel certification of an untyped conversion checker,
with \(\eta\)-laws.

Importantly, the specification remains typed:
\emph{specifying} \(\eta\)-laws in an untyped way is
perilous \cite{LennonBertrand2022a}, although their \emph{implementation} is reasonable
\cite{Coquand1996}, as we verify.
An important takeaway of our work is that, in a sense, talking about untyped conversion
is misleading: we should rather be talking about \emph{term-directed typed conversion}.
We are not the first to try and relate
typed and untyped conversion \cite{Siles2012,Abel2007}, although we
hope to demonstrate that concentrating on algorithms makes the proof of equivalence
pleasantly straightforward compared to those previous works.

\subparagraph{Certified decision procedures}

We have introduced the main meta-theoretic properties we focus on:
normalisation and injectivity.
On the other side, what do we want to certify of the type/conversion checker?
Abstractly, these take some input data, and hopefully return
either a positive answer ---and possibly some data associated to it, \eg an
inferred type--- or a negative answer. % ---and potentially some error information.
We say ``hopefully'' because conversion checking is non-structurally
recursive, and thus a priori partial.
% , which also affects type checking which calls it.
Given such a potentially partial procedure \(p\), supposed to correspond
to a proposition \(P\), we can decompose its correctness into
\begin{itemize}
  \item \emph{positive soundness}: if \(p\) answers positively, then \(P\) holds;
  \item \emph{negative soundness}: if \(p\) answers negatively, then \(\neg P\)
    holds;
  \item \emph{termination}: \(p\) always answers in one of these two ways.
\end{itemize}
Together, they imply that \(p\) is a decision procedure for \(P\), in
particular we have
\emph{completeness}: if \(P\) holds then \(p\) answers positively.
% Indeed, \(p\) must always answer,
% but if \(P\) holds that answer cannot be negative: by negative soundness
% this would lead to a contradiction.
%
% As summarised in \cref{fig:meta-algo-prop}
We show that injectivity of type constructors is
enough to prove positive soundness; with extra injectivity properties for terms,
which differ slightly between the two conversion-checking algorithms,
we establish negative soundness;
instead, adding normalisation, we obtain termination.
% Crucially, neither forms of soundness rely on normalisation!

\begin{figure}
  \begingroup
  \renewcommand{\arraystretch}{2.5}
  \centering
  \begin{tabular}{r c c c c}
    % \diagbox[width=17em]{Sufficient\\meta-theoretic properties}{Certified property}
    & \makecell{Positive\\soundness} & \makecell{Negative soundness\\(typed conversion)}
    & \makecell{Negative soundness\\(untyped conversion)} &
      Termination\\
    % \hline
    \makecell{Injectivity of\\type constructors} & \(\times\) & \(\times\) & \(\times\) &
    \(\times\) \\
    \makecell{Term-level\\injectivities} & & \(\times\) & \(\times\) & \\
    % \makecell{Injectivity of\\term constructors} & \(\times\) & \checkmark & \checkmark &
    % \checkmark \\
    % \makecell{Injectivity of\\neutral eliminators\\
    % (positive types)} & \(\times\) & \checkmark & \checkmark &
    % \checkmark \\
    % \makecell{Injectivity of\\neutral eliminators\\
    % (all types)} & \(\times\) & \(\times\) & \checkmark &
    % \checkmark \\
    \makecell{Normalisation} &&&& % & \(\times\) & \(\times\) & \(\times\) &
    \(\times\)
  \end{tabular}
  \endgroup

  \caption{Sufficient meta-theoretic properties for certification:
    to show a property of the algorithms (column) it suffices to have the
    properties of the type systems ticked in the corresponding lines.
    For instance, for positive soundness, injectivity of type constructors is sufficient.
  }
  \label{fig:meta-algo-prop}
\end{figure}

\subparagraph{Summary of results}
Building on a previous formalisation \cite{Adjedj2024} in the \Coq proof assistant~\cite{CDT2024}
(previously known as \textsc{Coq}),
we certify the properties gathered in \cref{fig:meta-algo-prop},
for two typing algorithms,
using respectively type-directed and untyped conversion checking,
for a version of Martin-Löf type theory (MLTT).%
\footnote{Featuring dependent function (\(\P\)) and pair (\(\Sig\)) types
with their \(\eta\)-laws, a universe, an empty type,
natural numbers, and an identity type, all with large elimination.}
The rest of this paper is organised as follows:
\begin{itemize}
  \item \cref{sec:declarative} summarises the declarative specification
    and its properties;
  \item \cref{sec:algorithms} presents the algorithms we certify;
  \item \cref{sec:certification} explores their certification
    and the relation of typed and untyped conversion;
  % \item \cref{sec:discussion} discusses the broader subtleties of typed versus untyped
  %   conversion;
  \item \cref{sec:conclusion} concludes with related and future work.
\end{itemize}
Our formalisation adds roughly 10kLoC to the original code base.
Links to it are indicated by this symbol: \formfile[tree/fscd25]{}.
Full versions of inference rules and other definitions are given in the appendix.

\section{The declarative system and its properties}
\label{sec:declarative}

\subsection{Definitions}

% \begin{figure}

%   \TODO[fill this]
  
%   \caption{\formfile{DeclarativeTyping.v} The declarative system (excerpt)}
%   \label{fig:MLTT}
% \end{figure}

% \begin{figure}
  
%   \TODO[fill this]

%   \caption{\formfile{NormalForms.v} Weak-head neutral and normal forms}
%   \label{fig:whnf-whne}
% \end{figure}

\subparagraph{\formfile{DeclarativeTyping.v} The type system}
Our declarative type system and specification is a version of Martin-Löf's type theory
\cite{MartinLoef1984} with its five mutually defined judgments:
typing of types, terms, and contexts
and conversion (definitional equality) of types and terms, written \(\convop\).
They are defined as mutual inductive predicates on ``raw'' objects.
% ---see \cref{app:decl-typing,app:decl-conv}.
We use \(=\) to denote meta-level equality ---which coincides
\(\alpha\)-equality, as our representation uses pure de Bruijn indices.
% The typing rules are given in full in \cref{sec:complete-decl-tt},
% and the formalisation.
% \TODO[figure?]

The language supports dependent function (\(\P\)) and pair (\(\Sig\)) types,
which we call ``negative'', with their \(\beta\) and \(\eta\)-rules.
% For functions, this equates \(f \convop \l x : A.f\ x\), and
% is inter-derivable with the following:
% \[
%   \inferdef{
%     \conv{\Gamma,x:A}{f\ x}{f'\ x}[B]
%   }{\conv{\Gamma}{f}{f'}[\P x : A.B]}
% \]
% This means that terms at \(\P\) (and \(\Sig\)) types are ``defined by their observations'':
% terms with equal observations are equal.
%
We also include inductive types, with dependent large eliminators:
an empty type \(\Empty\), natural number type \(\nat\), and an identity type \(\Id\).
Finally, we have a universe \(\univ\), with codes for all type formers except itself.
We refer to these types as ``positive''.
Although our work does not directly involve focusing or logical polarity, where
the positive/negative distinction stems from, it will still be an important distinction
in the way our conversion algorithms work.
We focus on \(\univ\), \(\P\) and \(\nat\) as representative examples on paper
and direct interested readers to the code for others.

% \subparagraph{Substitution}

The development relies on parallel substitution,
% \ie simultaneous substitution
written \(\sub{t}{\sigma}\).
% A substitution is represented as a function from natural numbers to terms, mapping each
% variable (represented as a de Bruijn index) to its image through the substitution.
If \(u\) is a term, we also denote as \(u\) the substitution mapping the
last variable in context to \(u\) and leaving all others untouched,
% so that \(\sub{t}{u}\) denotes substitution of the 0-th variable of \(t\) with \(u\)
as used in \eg \(\beta\)-reduction: \((\l x : A.t)~u \red \sub{t}{u}\).
% Typing and conversion are extended to substitutions pointwise.
% to judgments \(\typing{\Gamma}{\sigma}[\Delta]\) and \(\conv{\Gamma}{\sigma}{\tau}[\Delta]\).
% \TODO[mention weakenings?]
In \Coq, we use \AutoSubst~2~\cite{Stark2020,Dapprich2021}
to generate boilerplate for the substitution calculus
and automate tedious proofs with
a decision procedure for meta-level equality of expressions in this calculus.

\begin{figure}
  % \begin{mathpar}
  \[\begin{array}{l c l r}
    \boxed{\can f} & \eqdef& \univ \mid \P x : t. t \mid  \l x : t. t \mid \nat \mid 0 \mid \succ(t) % \mid \Sig x : t.t \mid \pair{t}{t}
    & \text{\textcolor{gray}{weak-head canonical forms}} \\
    \boxed{\neu n} & \eqdef& x \mid n\ t % \mid \fst{n} \mid \snd{n}
      \mid \natelim{n}{x.t}{t}{x.y.t} & \text{\textcolor{gray}{weak-head neutrals}}\\
    \boxed{\nf f} & \eqdef & \can f \vee \neu f & \text{\textcolor{gray}{weak-head normal forms}} \\
    \boxed{\isTy t} & \eqdef& \univ \mid \P x : t. t \mid \nat \mid n &
      \text{\textcolor{gray}{canonical types}}
  \end{array}\]
  % \vspace*{-1em}\\
  % \boxed{\isNat} \quad \boxed{\isFun} \quad \boxed{\isPos} \quad \text{are similar}
    \vspace*{-1.5em}
  % \end{mathpar}
  \caption{\formfile{Syntax/NormalForms.v} Canonical, neutral, and normal form predicates
  (excerpt, see \cref{app:nf}).
  Way to read the BNF-style grammars:
  \(\isTy\) is the predicate on terms given by the grammar
  \(\univ \mid \P x : t. t \mid \nat \mid n\)
  (\(n\) stands for a neutral and \(t\) for an arbitrary term).}
  \label{fig:nf}
\end{figure}

% \subparagraph{Normal forms}

\subparagraph{Normal forms and reduction}
Amongst terms, we particularly focus on (weak-head) normal forms, 
as this is the class with good injectivity properties, defined in \cref{fig:nf}.
% We concentrate on weak-head normal forms, which can be classified in two classes.
%
% \begin{definition}[\formfile{Syntax/NormalForms.v} (Weak-head) normal and neutral forms]
  \emph{Canonical forms}
    are terms starting with a constructor: \(\succ\), \(\l\), etc.
  \emph{Neutral forms} are ``stuck'' terms, consisting of a spine of eliminators
  on top of a variable.
  \emph{Normal forms} are either.
% \end{definition}
We also define subclasses for each type: \(\isFun t\)
holds if \(t\) is either a neutral or a \(\l\)-abstraction,
% \(\isTy t\) holds if \(t\) is a neutral of if its head constructor
% is \(\P\), \(\nat\) or \(\univ\),
and similarly for \(\isTy\) and \(\isNat\). The \(\isPos\) predicate
isolates positive types: \(\nat\), \(\univ\) and neutrals.

\begin{figure}
  
\begin{mathpar}
  \jformlow{t \ored t'}[Term $t$ weak-head reduces in one step to term $t'$]
  \inferdef[\(\beta\)Fun]
    { }
    {(\l x : A . t)\ u \ored \sub{t}{u}}
  \quad
  \inferdef[\(\beta\)Zero]{ }{
    \natelim{0}{x.P}{b_{0}}{x.y.b_{\succ}} \ored b_{0}}
  % \label{infrule:mltt-beta-nil}
  \quad
  % \inferdef[\(\beta\)Succ]{ }{
  %   \natelim{\succ(n)}{x.P}{b_{0}}{x.y.b_{\succ}} \ored
  %   \sub{b_{\succ}}{n,\natelim{n}{x.P}{b_{0}}{x.y.b_{\succ}}}}
  % \and
  \inferdef[AppRed]
    {t \ored t'}
    {t~u \ored t'~u}
  \quad \dots
  \vspace*{-1em}
  % \inferdef[RedNatElim]
  %   {t \ored t'}
  %   {\natelim{t}{x.P}{b_{0}}{x.y.b_{\succ}}
  %     \ored \natelim{t'}{x.P}{b_{0}}{x.y.b_{\succ}}}
  % \jform{t \red t'}[Term $t$ weak-head reduces in multiple steps to term $t'$]
  %
  % \inferdef[RedBase]
  % { }
  % {t \red t}
  % \and
  % \inferdef[RedStep]
  % {t \ored t' \\ t' \red t''}
  % {t \red t''}
\end{mathpar}

  \caption{\formfile{Syntax/UntypedReduction.v}  Weak-head reduction (excerpt, see \cref{app:reduction}).}
  \label{fig:reduction-main}

\end{figure}

% \subparagraph{Reduction}

Finally, reduction \(t \red t'\) is
the reflexive, transitive closure of one-step reduction, defined in \cref{fig:reduction-main}.
We use only a deterministic \emph{weak-head} reduction.
Reduction is \emph{not} part of the specification of typing,
we present it at this stage only to state normalisation in \cref{sec:meta-theory}.

% That is, the reduction consists of top-level \(\beta\)-rules, and congruence only in
% the ``main'' subterm of a eliminator, for instance if \(t \ored t'\) then \(t\ u \ored t'\ u'\). Reduction

% \FloatBarrier
\subsection{Meta-theory}
\label{sec:meta-theory}

Let us now \emph{state} the main meta-theoretic properties we consider ---we comment about
how to \emph{prove} them at the end of this section and in \cref{sec:related}.
In \Coq, these are represented as type-classes: most theorems take instances of the relevant
type-classes as parameters. This enables fine-grained bookkeeping of the theorems' dependency,
while keeping it lightweight as instances are automatically filled in.
% and which enables fine-grained bookkeeping of dependency and
This is inspired by the treatment of function extensionality and
univalence in the HoTT library~\cite{Bauer2017}.

% \subparagraph{Substitution}

% \begin{property}[\formline{TypingProperties/PropertiesDefinition.v}{13}{TypingSubst} Substitution]
%   \label{prop:subst}

%   A type theory enjoys \emph{substitution} if whenever \(\typing{\Gamma}{t}[T]\) and
%   \(\typing{\Delta}{\sigma}[\Gamma]\) we have \(\typing{\Delta}{\sub{t}{\sigma}}[\sub{T}{\sigma}]\) and similarly for the other judgments.

% \end{property}

% This is a fairly easy property, which can be established by direct induction on the
% typing derivation.%
% \footnote{In the formalisation, we take a shortcut by using the logical
% relation instead, but this is not crucial.}
% Note, though, that this is specific to the declarative system,
% as consequence of the fact that declarative conversion is a congruence. By contrast,
% substitution for algorithmic conversion is much more difficult: it implies normalisation!
% \TODO[ref]

\subparagraph{Type-level injectivity}
\begin{property}[\formline{TypingProperties/PropertiesDefinition.v}{78}{TypeInjectivityConsequencesectivity} Injectivity and no-confusion of type constructors]
  \label{prop:inj-ty}

  Assume \(T\) and \(T'\) are convertible types in weak-head normal form, \ie we have
  that \(\isTy T\), \(\isTy T'\), and \(\conv{\Gamma}{T}{T'}\).
  Then one of the following holds:
  \begin{itemize}
    \item \(T = \nat = T'\) % or \(T = \Empty = T'\)
      or \(T = \univ = T'\);
    \item \(T = \P x : A.B\), \(T' = \P x : A'.B'\), with \(\conv{\Gamma}{A'}{A}\) and
      \(\conv{\Gamma, x : A'}{B}{B'}\);
    % \item \(T = \Sig x : A.B\), \(T' = \Sig x : A'.B'\), with \(\conv{\Gamma}{A}{A'}\) and
      % \(\conv{\Gamma, x : A}{B}{B'}\);
    % \item \(T = \Id(A,a,b)\), \(T' = \Id(A',a',b')\), with \(\conv{\Gamma}{A}{A'}\),
    %   \(\conv{\Gamma}{a}{a'}[A]\) and \(\conv{\Gamma}{b}{b'}[A]\);
    \item \(T\), \(T'\) are both neutral, and \(\conv{\Gamma}{T}{T'}[\univ]\).
  \end{itemize}
  Any other case is thus impossible (\emph{no-confusion}),
  \eg we cannot have \(\conv{\Gamma}{\P x : A.B}{\nat}\).
\end{property}

This cannot be proven by mere induction, and indeed it implies a form
of non-degeneracy, namely \emph{equational consistency}:
not all types are convertible.
% This is actually the most important property for our purposes.
Consequences are numerous.

\begin{corollary}[\formline{TypingProperties/TypeInjectivityConsequences.v}{440}{subject_reduction} Preservation]
  Assuming \cref{prop:inj-ty},
  if \(\typing{\Gamma}{A}\) and \(A \red A'\) then \(\conv{\Gamma}{A}{A'}\). Similarly,
  if \(\typing{\Gamma}{t}[A]\) and \(t \red t'\) then \(\conv{\Gamma}{t}{t'}[A]\).
  % This is a consequence of \cref{prop:subst,prop:inj-ty}.
\end{corollary}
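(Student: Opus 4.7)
The plan is to reduce the multistep case to one-step preservation via transitivity of conversion, then induct on $t \ored t'$, treating types and terms together by mutual induction (substitution under binders forces their entanglement, since reducing under a binder in a body turns a term reduction into a type reduction when that body is a type).

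For the congruence rules such as \textsc{AppRed}, the weak, inversion-on-derivation lemmas---themselves provable without \cref{prop:inj-ty}---break the outer elimination into its components, the induction hypothesis delivers the conversion of the reduced subterm, and the congruence rules of conversion together with the substitution lemma close the goal. These cases are routine.

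The head reductions are where \cref{prop:inj-ty} really does the work. Consider $\beta$Fun as the prototype: from $\typing{\Gamma}{(\l x : A.t)\ u}[T]$, weak inversion of application yields $A', B'$ with $\typing{\Gamma}{\l x : A.t}[\P x : A'.B']$, $\typing{\Gamma}{u}[A']$ and $T \convop \sub{B'}{u}$; inverting the $\l$-typing produces some $B$ with $\typing{\Gamma, x : A}{t}[B]$ and $\P x : A.B \convop \P x : A'.B'$. Injectivity of $\P$ from \cref{prop:inj-ty} now yields $\conv{\Gamma}{A'}{A}$ and $\conv{\Gamma, x : A}{B}{B'}$; a context-conversion lemma retypes $t$ in $\Gamma, x : A'$, the substitution lemma delivers $\typing{\Gamma}{\sub{t}{u}}[\sub{B'}{u}]$, and the $\beta$-axiom of conversion (transported along $T \convop \sub{B'}{u}$) finishes the case. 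The remaining head reductions ($\beta$Zero, and the analogues for $\Sig$, $\Id$, $\Empty$) follow the same template, each invoking the corresponding clause of \cref{prop:inj-ty}.

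The main obstacle, and the reason \cref{prop:inj-ty} is indispensable rather than an optional convenience, is precisely this matching step: without injectivity one cannot reconcile the type components declared by an introduction form (the domain of a $\l$, the parameters of $\succ$, etc.) with those dictated by the eliminator's expected type, and so the reduct cannot be retyped after substitution. Once that bottleneck is cleared for each redex shape, the rest of the proof is the usual bureaucracy of substitution, context conversion, and the congruences of typed conversion.
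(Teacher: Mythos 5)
Your proof is correct and follows essentially the route the paper (and its formalisation) takes --- the paper only remarks that preservation uses the injectivity half of \cref{prop:inj-ty}, and your \(\beta\)-case is exactly where that remark cashes out: one-step preservation by induction on the reduction, weak inversion lemmas, \(\P\)-injectivity to reconcile the \(\l\)-annotation's domain with the one dictated by the application's type, then context conversion, substitution, the \(\beta\)-axiom, and transitivity for the multistep closure. One minor slip in motivation only: reduction here is weak-head and never goes under binders, so the entanglement of the type- and term-level statements comes not from reducing in binder bodies but from the \textsc{El} rule (a type may be well-formed because it is a term of type \(\univ\), so type-level preservation falls back on term-level preservation at \(\univ\)); this does not affect the validity of the argument.
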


% \begin{proof}[Proof sketch]
% The key point in the proof is the following: assuming a \(\beta\)-redex \((\l x : A.t)~u\)
% has type \(T\), by inversion on its typing derivation we get the following properties:
% \begin{mathpar}
%   \typing{\Gamma, x : A}{t}[B] \and
%   \typing{\Gamma}{u}[A'] \and
%   \conv{\Gamma}{\sub{B'}{u}}{T} \and
%   \conv{\Gamma}{\P x : A.B}{\P x : A'.B'}
% \end{mathpar}
% Now, applying injectivity on the last hypothesis (combined with substitutivity)
% gives us exactly what is needed to build the following derivation tree
% \[
% \infer*{
%   \infer*{
%     \typing{\Gamma, x : A}{t}[B] \\
%     \infer*{\typing{\Gamma}{u}[A'] \\ \conv{\Gamma}{A'}{A}}
%       {\typing{\Gamma}{u}[A]}
%   }{\conv{\Gamma}{(\l x : A.t)~u}{\sub{t}{u}}[\sub{B}{u}]} \\
%     \conv{\Gamma}{\sub{B}{u}}{\sub{B'}{u}} \convop T
% }{\conv{\Gamma}{(\l x : A.t)~u}{\sub{t}{u}}[T]}
% \]
% The reasoning is entirely similar for the other kind of \(\beta\)-redexes.
% \end{proof}

\begin{corollary}[\formline{TypingProperties/TypeInjectivityConsequences.v}{505}{WhClassification} Types classify normal forms]
  \label{prop:classify}
  Assuming \cref{prop:inj-ty}, consider \(t\) a normal form.
  If \(\typing{\Gamma}{t}[\P x : A.B]\), then \(\isFun t\).
  If \(\typing{\Gamma}{t}[\nat]\), then \(\isNat t\).
  If \(\typing{\Gamma}{t}[\univ]\), then \(\isTy t\).
  If \(\typing{\Gamma}{t}[T]\) and \(T\) is neutral, then \(t\) is neutral.
\end{corollary}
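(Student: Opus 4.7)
The plan is to split on the defining disjunction of normal forms (\cref{fig:nf}): $t$ is either weak-head canonical or weak-head neutral. The neutral case will be essentially free, since the predicates $\isFun$, $\isNat$ and $\isTy$ all include neutrals by definition, and the fourth statement is then immediate.

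When $t$ is canonical, I would case split on its head constructor. Each canonical constructor has a unique \emph{principal} weak-head canonical type, which inversion of the derivation (absorbing any uses of the conversion rule) yields up to conversion with the ascribed type $T$: $\l x : A'.t'$ has principal type of shape $\P x : A'.B'$; the type formers $\P$, $\nat$ and $\univ$ all have principal type $\univ$; and $0$ and $\succ(u)$ have principal type $\nat$. Since both the principal type and the ascribed $T$ are in each relevant case weak-head canonical, \cref{prop:inj-ty} applies directly, and its no-confusion component matches the head of the principal type with that of $T$. This rules out all mismatches in one go: $T = \P x : A.B$ keeps only the $\l$-abstraction, yielding $\isFun t$; $T = \nat$ keeps only $0$ and $\succ(u)$, yielding $\isNat t$; $T = \univ$ keeps the three type formers, yielding $\isTy t$; and $T$ neutral is inconsistent with every canonical case, so $t$ must already have been neutral.

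The main obstacle will be the inversion step itself: because typing has an unrestricted conversion rule, the derivation of $\typing{\Gamma}{t}[T]$ may end in an arbitrary chain of conversion rules, so one cannot read off a principal type from a single last typing rule. The standard remedy is to establish, by induction on the derivation, that typing a canonical constructor always provides its principal type up to conversion with $T$. \cref{prop:inj-ty} then absorbs that entire chain into a single head match, which is precisely the leverage that injectivity of type constructors provides over brute inversion.
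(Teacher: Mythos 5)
Your proposal is correct and follows essentially the same route as the paper: case-split on canonical versus neutral, invert the typing derivation of a canonical term to obtain its principal weak-head type up to conversion, and then use the no-confusion component of \cref{prop:inj-ty} to match heads --- exactly the point the paper makes when it notes that ``classification uses no-confusion''. The inversion lemma absorbing chains of conversion rules that you identify as the main obstacle is indeed the standard auxiliary result needed here, and it is provable by straightforward induction on the typing derivation without any appeal to injectivity.
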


Preservation relies on the injectivity part, while classification uses no-confusion.
\Cref{prop:classify} is the key lemma towards progress
---the fact that every well-typed term is either a normal form or reduces---,
although we do not show this in the formalisation.
Thus, \cref{prop:inj-ty} suffices to establish safety in the
progress-and-preservation approach \cite{Wright1994}.
% However, as we will see it also crucial on the algorithmic side.

\subparagraph{Term-level injectivities}

% are less critical from a meta-theoretic perspective, but of course important
% for the algorithmic aspect.
Statements of injectivity (and no-confusion) for term constructors
are relatively close to those for types. % For instance, for natural numbers, we have the following.

\begin{property}[\formline{TypingProperties/PropertiesDefinition.v}{154}{nat_conv_inj} Injectivity and no-confusion at \(\nat\)]
  \label{prop:inj-nat}

  Assume \(n\) and \(n'\) are such that
  \(\isNat n\), \(\isNat n'\), and \(\conv{\Gamma}{n}{n'}[\nat]\).
  % \footnote{Note that by \cref{prop:classify}, assuming that \(n\) and \(n'\) are
  % normal forms would be enough.}
  Then one of the following must hold:
  \begin{itemize}
    \item \(n = 0 = n'\);
    \item \(n = \succ(t)\), \(n' = \succ(t')\), with \(\conv{\Gamma}{t}{t'}[\nat]\);
    \item \(n\), \(n'\) are both neutral.
  \end{itemize}
  % Any other case is impossible, \eg we cannot have \(\conv{\Gamma}{0}{\succ(t)}[\nat]\).
\end{property}

Maybe surprisingly, the eliminator makes these statements directly provable. Indeed,
assuming \(\succ(t) \convop \succ(t')\), by the computation and congruence rule for the eliminator, we get that
\(t \convop \natelim{x.\nat}{\succ(t)}{0}{x.y.x} \convop \natelim{x.\nat}{\succ(t')}{0}{x.y.x} \convop t' \ty \nat \).
We can similarly reduce no-confusion to that for types.
% Assuming \(\succ(t) \convop 0\), we derive
% \[\nat \convop \natelim{x.\univ}{\succ(t)}{\nat}{x.y.\nat \to \nat} \convop \natelim{x.\univ}{0}{\nat}{x.y.\nat \to \nat} \convop \nat \to \nat \]
A similar phenomenon happens at \(\P\) types, where injectivity of \(\l\) can also
be directly derived.

Injectivity for
% the identity type \formline{TypingProperties/PropertiesDefinition.v}{133}{id_conv_inj} and
the universe (\formline{TypingProperties/PropertiesDefinition.v}{149}{univ_conv_inj})
follows a similar pattern, but is not so easily proven for lack of an eliminator.
For neutral types, which have no constructors, \cref{prop:classify} is
all we need: two (convertible) normal forms at a neutral type must be neutrals.
% simply because of their type.
% For negative types, which are not defined by their constructors,
% injectivity will also not be necessary.

\subparagraph{Neutral injectivities}

To complete the injectivity picture, we turn to neutrals, for which
injectivity/no-confusion can be expressed in two ways. The first
follows a familiar pattern. % laid down earlier.

\begin{property}[\formline{TypingProperties/PropertiesDefinition.v}{224}{NeutralInj}
  Injectivity of neutral eliminators]
  \label{prop:inj-neu}
  
  Assume \(n\) and \(n'\) are neutrals such that \(\conv{\Gamma}{n}{n'}[T]\).
  Then one of the following must hold:
  \begin{itemize}
    \item \(n = x = n'\), with \((x : T') \in \Gamma\) and \(\conv{\Gamma}{T'}{T}\);
    \item \(n = m\ u\), \(n' = m'\ u'\) with \(\conv{\Gamma}{m}{m'}[\P x : A.B]\),
      \(\conv{\Gamma}{u}{u'}[A]\) and \(\conv{\Gamma}{\sub{B}{u}}{T}\);
    % \item \(n = \fst{m}\), \(n' = \fst{m'}\) with \(\conv{\Gamma}{m}{m'}[\Sig x : T.B]\)
    % \item \(n = \snd{m}\), \(n' = \snd{m'}\) with \(\conv{\Gamma}{m}{m'}[\Sig x : A.B]\)
    %   and \(\conv{\Gamma}{\sub{B}{u}}{T}\)
    \item \(n = \natelim{m}{x.P}{t_0}{x.y.t_{\succ}}\),
      \(n' = \natelim{m'}{x.P'}{t'_0}{x.y.t'_{\succ}}\),
      with convertible subterms.
      % \(\conv{\Gamma}{m}{m'}[\nat]\), \(\conv{\Gamma,x:\nat}{P}{P'}\),
      % \(\conv{\Gamma}{t_0}{t'_0}[\sub{P}{0}]\),
      % \(\conv{\Gamma,x:\nat,y:\sub{P}{x}}{t_{\succ}}{t'_{\succ}}[\sub{P}{\succ(x)}]\) and
      % \(\conv{\Gamma}{\sub{P}{m}}{T}\).
  \end{itemize}
\end{property}

However, in some cases this is too strong: at negative types our type-directed algorithm eagerly
\(\eta\)-expands, so the above are only
really needed at positive types. To state the ``minimal'' property we require,
we take another way, starting with the following definition.

\begin{definition}[\formline{DeclarativeTyping.v}{336}{DeclNeutralConversion} Neutral comparison]
  \label{def:neutral-comp}
  \emph{Neutral comparison} (see \cref{app:neutral-comp}),
  written \(\nconv{\Gamma}{n}{n'}{T}\), is the least relation that is:
  \begin{itemize}
    \item reflexive on variables, \ie \(\nconv{\Gamma}{x}{x}{T}\) if \((x : T) \in \Gamma\);
    \item closed under the congruence rules for eliminators;
    \item stable under conversion: if \(\nconv{\Gamma}{n}{n'}{T}\) and
      \(\conv{\Gamma}{T}{T'}\) then \(\nconv{\Gamma}{n}{n'}{T'}\).
  \end{itemize}
\end{definition}
% By congruence for eliminators, we mean \eg the following rule for application:
% \[\inferdef[NApp]
% {\nconv{\Gamma}{n}{n'}{\P x : A . B} \\ \conv{\Gamma}{u}{u'}[A]}
% {\nconv{\Gamma}{n\ u}{n'\ u'}{\sub{B}{u}}}
% \]
% That is, we re-use neutral comparison for the ``main'' argument of the eliminator,
% and general conversion for others.

We can then rephrase injectivity of neutral constructors as follows.

\begin{property}[\formline{TypingProperties/PropertiesDefinition.v}{185}{ConvNeutralConv}
  Completeness of neutral comparison]
  \label{prop:neu-compl}
  Given any two neutrals \(n\) and \(n'\) such that \(\conv{\Gamma}{n}{n'}[T]\),
  we have \(\nconv{\Gamma}{n}{n'}{T}\).
\end{property}

\begin{proposition}[%
  \formline{TypingProperties/NeutralConvProperties.v}{189}{neu_inj_conv_neu}]

  % Assuming substitutivity (\cref{prop:subst}),
  \Cref{prop:inj-neu} and \cref{prop:neu-compl} are equivalent.

\end{proposition}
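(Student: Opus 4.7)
The plan is to handle the two directions separately, each by a straightforward induction matching the structure of the relation involved, plus one basic soundness lemma that is independent of injectivity.

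For the direction from Property~\ref{prop:inj-neu} to Property~\ref{prop:neu-compl}, I would proceed by induction on the neutral \(n\), with the goal of building a neutral comparison derivation for \(\conv{\Gamma}{n}{n'}[T]\). Property~\ref{prop:inj-neu} gives a case split matching the head of \(n\). In the variable case, I apply the reflexivity clause of \(\nconv\) at the contextual type \(T'\), then close with the stability-under-conversion clause using \(\conv{\Gamma}{T'}{T}\). In the application case, the induction hypothesis applied to the neutral head \(m\) (a strict subterm) yields \(\nconv{\Gamma}{m}{m'}{\P x : A.B}\); combining it with the congruence rule for application gives \(\nconv{\Gamma}{m\ u}{m'\ u'}{\sub{B}{u}}\), and one last application of stability using \(\conv{\Gamma}{\sub{B}{u}}{T}\) reaches the target type. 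The \(\natelim\) case is treated the same way.

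For the converse direction, I first invoke Property~\ref{prop:neu-compl} to turn the hypothesis \(\conv{\Gamma}{n}{n'}[T]\) into \(\nconv{\Gamma}{n}{n'}{T}\), then induct on this derivation. The reflexivity clause directly matches the variable bullet of Property~\ref{prop:inj-neu} (with \(T' = T\), so \(\conv{\Gamma}{T'}{T}\) holds by reflexivity). Each congruence clause reproduces the shape required by the corresponding bullet; the only translation step is to convert the internal \(\nconv\) premise for the neutral head back into a \(\convop\) judgment, which I do via the standard soundness lemma \(\nconv \subseteq \convop\) (a basic property of \(\nconv\) that does not rely on any injectivity). The stability clause is handled by transitivity: the induction hypothesis yields Property~\ref{prop:inj-neu} relative to the smaller type, and one composes the type conversion appearing in each bullet (\(\conv{\Gamma}{T'}{T_0}\), \(\conv{\Gamma}{\sub{B}{u}}{T_0}\), etc.) with the additional \(\conv{\Gamma}{T_0}{T}\) to retarget it to \(T\).

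The main subtlety is this stability clause: it creates a mismatch between the type carried inside the \(\nconv\) derivation and the type exposed in Property~\ref{prop:inj-neu}, so one must thread type conversions through every case of the induction and close them by transitivity. Once this bookkeeping is in place, the equivalence reduces to matching up the congruence rules of \(\nconv\) with the corresponding bullets of Property~\ref{prop:inj-neu} one-for-one, which confirms that the two formulations are simply two ways of packaging the same information.
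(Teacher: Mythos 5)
Your proposal is correct and matches the natural argument that the formalisation carries out (the paper itself only points to the \texttt{Coq} development for this statement): one direction by structural induction on the neutral using the case split of \cref{prop:inj-neu}, the other by induction on the \(\nconv\) derivation obtained from \cref{prop:neu-compl}, using soundness of neutral comparison (\(\nconv \subseteq \convop\), provable by plain induction) and threading type conversions through the stability clause by transitivity. The only bookkeeping you leave implicit is that the reflexivity instances of \(\convop\) you invoke require well-formedness of the types involved, which follows from standard validity of the hypothesis \(\conv{\Gamma}{n}{n'}[T]\) and needs no injectivity.
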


% \begin{proof}
% The proof is a direct induction, in one direction on the derivation of neutral comparison,
% in the other on the structure of either neutral.
% \end{proof}

But now, we can weaken it to only consider \emph{positive types}.

\begin{property}[\formline{TypingProperties/PropertiesDefinition.v}{177}{ConvNeutralConvPos}
  Completeness of neutral comparison at positive types]
  \label{prop:neu-compl-pos}
  Given any two neutrals \(n\) and \(n'\) such that \(\conv{\Gamma}{n}{n'}[T]\),
  if moreover \(\isPos T\), then we have \(\nconv{\Gamma}{n}{n'}{T}\).
\end{property}

\subparagraph{Normalisation} The last property is
normalisation. %, which expresses that the evaluation of any well-typed term terminates.
Its main use is to argue for termination of the algorithm, so it
has to go beyond weak-head reduction and take into account
\(\eta\)-expansion, and subterms of (weak-head) normal forms.
% This leads to the following definition of \emph{deep} normalisation.
% ---which we call this way in contrast with \emph{weak-head}
% normalisation, which simply says that a term reduces to a weak-head normal form.

\begin{definition}[\formline{TypingProperties/NormalisationDefinition.v}{14}{dnorm_tm}
  Deeply normalising]
  \label{def:normalisation}
  A term \(t\) is \emph{deeply normalising} at type \(A\) in context \(\Gamma\)
  if they reduce respectively to weak-head normal forms \(t'\) and \(A'\),
  and moreover:
  \begin{itemize}
    \item \(A'\) is a negative type, and the \(\eta\)-expansion of \(t'\)
      is itself normalising;
    \item or \(A'\) is a positive type, and the subterms of \(t'\) are
      normalising at the relevant types.
  \end{itemize}
  A type \(A\) is \emph{deeply normalising} in context \(\Gamma\)
  if it reduces to a weak-head normal form, whose subterms are themselves normalising.
  See \cref{sec:deep-norm} for the inference rules.
\end{definition}

While not apparent in this short version of the definition, the context actually
plays a role in order to express the type at which a neutral should be
normalising.

% The statement of normalisation is then straightforward.

\begin{property}[\formline{TypingProperties/PropertiesDefinition.v}{240}{DeepNormalisation}
  Deep normalisation]
  \label{prop:deep-norm}

  Every well-typed term is deeply normalising (at its type). Every well-formed type
  is deeply normalising.
  
\end{property}

Meta-theoretic consequences include canonicity
(\formline{TypingProperties/NormalisationConsequences.v}{307}{nat_canonicity})
---any closed
term of type \(\nat\) is convertible to one of the form \(\succ^{n} 0\)---
and, most importantly, logical consistency. This means that to prove
\cref{prop:deep-norm}, the meta-theory must be logically stronger than the object
theory.

\begin{corollary}[%
  \formline{TypingProperties/NormalisationConsequences.v}{282}{consistency}
  Logical consistency]
  \label{prop:consistency}
  Assuming \cref{prop:deep-norm}, there are no closed terms of type \(\Empty\).
\end{corollary}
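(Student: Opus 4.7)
The plan is to prove consistency by contradiction: from a hypothetical closed term $t$ with $\typing{}{t}[\Empty]$, I extract a deep normal form and show it cannot exist. First, by \cref{prop:deep-norm}, $t$ weak-head reduces to some $t'$ that is a normal form; since $\Empty$ is already canonical, the definition of deep normalisation applies directly at this type. Preservation (the corollary immediately after \cref{prop:inj-ty}) gives $\conv{}{t}{t'}[\Empty]$, and combining this with the conversion rule of the typing judgment yields $\typing{}{t'}[\Empty]$.

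Second, I case-split on the normal form $t'$, which is either canonical or neutral by \cref{fig:nf}. If $t'$ is neutral, then by its grammar it is a spine of eliminators sitting on top of some variable $x$; but the ambient context is empty, so no such variable is available, an immediate contradiction. If $t'$ is canonical, then it is one of $\univ$, $\P x : A.B$, $\l x : A.u$, $\nat$, $0$, or $\succ(u)$ (with analogous cases for $\Sig$, $\Id$, etc.). Inverting the typing derivation of each constructor yields a type that is canonically of the form $\univ$, a $\P$, a $\nat$, or an $\Id$; combined with $\typing{}{t'}[\Empty]$ and injectivity of typing up to conversion, this produces a convertibility such as $\univ \convop \Empty$, $\P \convop \Empty$, or $\nat \convop \Empty$, each contradicted by the no-confusion part of \cref{prop:inj-ty}.

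The main obstacle is making the second case tidy: there are many constructors to handle, and each requires its own inversion step. A cleaner strategy is to first extend \cref{prop:classify} with the clause \emph{``if $\typing{\Gamma}{t}[\Empty]$ and $t$ is a normal form, then $t$ is neutral''}, which follows by exactly the same pattern as the corollary's other cases: a canonical inhabitant of $\Empty$ would force a forbidden convertibility between $\Empty$ and another canonical type constructor, ruled out by \cref{prop:inj-ty}. With this strengthened classification lemma in hand, the canonical case is dispatched at once, and only the variable-in-empty-context argument remains, giving a clean proof that rests on deep normalisation plus type-level injectivity/no-confusion alone, matching exactly the dependencies advertised in \cref{fig:meta-algo-prop}.
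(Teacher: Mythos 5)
Your argument is the classical progress-and-preservation consistency proof, and as mathematics it goes through, but it does not establish the corollary \emph{as stated}: both Preservation and your (extended) classification lemma require injectivity and no-confusion of type constructors (\cref{prop:inj-ty}), whereas the corollary assumes only deep normalisation (\cref{prop:deep-norm}). In the paper's setting, where dependencies on meta-theoretic properties are tracked explicitly via type classes, this discrepancy is the whole point of the statement ---and note that \cref{fig:meta-algo-prop} catalogues dependencies of the \emph{algorithms}, so it does not license the extra hypothesis here. The paper's proof avoids \cref{prop:inj-ty} entirely by reading consistency off the inductive, type-indexed structure of deep normalisation (\cref{sec:deep-norm}): from \(\typing{\cdot}{t}[\Empty]\) and \cref{prop:deep-norm} one gets \(\dnf{\cdot}{t}[\Empty]\), hence \(t \red u\) with \(\dnf*{\cdot}{u}[\Empty]\); since \(\Empty\) is a positive weak-head normal type that is neither \(\univ\), \(\P\), \(\Sig\), \(\nat\) nor \(\Id(\dots)\), the only rule that can derive \(\dnf*{\cdot}{u}[\Empty]\) is the neutral one, forcing \(\dne{\cdot}{u}{S}\) for some \(S\) ---and deep neutrals are rooted at a variable declared in the context, which is impossible in the empty context. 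In other words, the case analysis you perform by hand (normal form is canonical or neutral; canonical forms are excluded by no-confusion) is already baked into the normalisation predicate, which simply has no canonical-form rule at \(\Empty\). Your route is more generic ---it would work from a bare, un-indexed weak-head normalisation statement--- but pays for it with the extra injectivity hypothesis; the paper's route buys the minimal dependency advertised. A further small nit: deriving \(\typing{\cdot}{t'}[\Empty]\) from \(\conv{\cdot}{t}{t'}[\Empty]\) uses validity of conversion (both sides of a derivable conversion are well-typed), not the conversion rule of the typing judgment, which only changes the type, not the subject.
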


\subparagraph{\formfile{TypingProperties/LogRelConsequences.v} Proving the properties}
In our formalisation, most properties
are proven via a logical relation, in the style pioneered
by Abel \etal\ \cite{Abel2017}, and later adapted in \Coq \cite{Adjedj2024},
to which we refer for details. Multiple universes in the meta-theory
provide the logical strength we need to prove consistency.
This approach is however by far not the only possibility,
see the discussion in \cref{sec:related}. Note the following two points.

First, the proof of deep normalisation is indirect. Indeed, the logical
relation is parameterised by an abstract partial equivalence relation (PER) relating neutrals,
which we instantiate with algorithmic neutral comparison
to obtain normalisation for subterms of neutrals.
To avoid this detour through algorithmic aspects,
we envisioned strengthening the logical relation to directly
encompass neutrals, but did not find a satisfying approach.
% give more prominence to elimination contexts in the style of classical
% realisability~\cite{Dagand2019}. However, this did not seem easy,
% we quickly hit the issue that duality-heavy setups do not mix well with dependent types.
% Still, this would be an interesting future direction to explore.

Second, as logical relations at negative types
are naturally defined by observations
(\eg a function is reducible when its application to reducible inputs are
reducible), these types ``know nothing'' about neutrals, and
completeness of neutral comparison at negative types cannot be derived easily.
We expect other proofs of ``semantic'' flavour
to have % similar negative interpretation for negative types, and
similar issues with full completeness of neutrals,
making the restriction to positive types (\cref{prop:neu-compl-pos} vs.\ \cref{prop:neu-compl})
an important distinction.
And indeed, while completeness at positive types is a direct
consequence of our logical relation (\formline{TypingProperties/LogRelConsequences.v}{380}{ConvNeutralConvPosLogRel}), the approach we found to completeness at all types takes
a long detour via the two algorithmic presentations
(\formline{Algorithmic/UntypedTypedConv.v}{684}{conv_pos_all})
and strengthening.
% where it is easier to obtain thanks to the more constrained structure of the judgments.

\section{Algorithmic typing and conversion}
\label{sec:algorithms}

Let us now turn to our algorithms for conversion and typing.
Only the untyped conversion algorithm is new, the typed conversion and
bidirectional typing ones come directly from MLTT \textit{à la} \textsc{Coq} \cite{Adjedj2024}.
We give the algorithms informally as rules, which are translated
in the formalisation in two ways. First (\formfile{AlgorithmicJudgments.v})
as inductively defined relations, like declarative typing.
% This presentation is easier for proofs, although is it but an intermediate.
%
Second, we implement them as functions (\formfile{Checkers/Functions.v}).
As we wish to separate their definition from their termination
argument, we must embrace partiality.
To that end, we use McBride's free recursion monad \cite{McBride2015},
via Winterhalter's \textsc{PartialFun} \cite{Winterhalter2023}.
This lets us define the functions prior to any proofs;
immediately execute them with bounded recursion depth
---McBride's ``petrol semantics''; extract a typing (dis)proof from a terminating
execution once we prove either soundness; and get proper decidability
once we show normalisation. % if we wish to prove termination. % ---and are able to!
Cherry on the cake: as the recursion monad exposes
the programs' recursive structure, we can generically derive
``functional'' induction principles, following said recursive structures.
% which are our workhorse to reason on the functions.

\subparagraph{\formline{AlgorithmicJudgments.v}{153}{WfTypeAlg} Bidirectional typing}

Our typing algorithm, as most for dependent types, is bidirectional:
it separates \emph{inference}, written \(\inferty{\Gamma}{t}{T}\),
where the type is an output to be found, from type \emph{checking},
written \(\checkty{\Gamma}{t}[T]\), where the type is a constraint given as input.
We refer to others \cite{McBride2018,Dunfield2021,LennonBertrand2022PhD}
for exposition ---this last reference is the closest
to us. Bidirectional typing generally has two main interests,
although only the first is relevant to us here.

First, bidirectional typing gives better control over conversion. By replacing
% \nameref{rule:conv} below, from the declarative system,
the conversion rule with syntax-directed rules,
% \nameref{rule:check} and \nameref{rule:infred}
we get a term-directed system,
amenable to implementation. % \TODO[should \nameref{rule:conv} be somewhere else?]
% \begin{mathpar}
%   % \inferdef[Conv]
%   % {\typing{\Gamma}{t}[A] \\
%   %   \conv{\Gamma}{A}{B}}
%   % {\typing{\Gamma}{t}[B]} \label{rule:conv} \\
%   \inferdef[Check]
%   {\inferty{\Gamma}{t}{T'} \\ \bconv{\Gamma}{T'}{T}}
%   {\checkty{\Gamma}{t}{T}} \label{rule:check} \and
%   \inferdef[InfRed]
%   {\inferty{\Gamma}{t}{T} \\ T \red T'}
%   {\inferty*{\Gamma}{t}{T'}} \label{rule:infred}
% \end{mathpar}
This makes the bidirectional system \textit{a priori} more restrictive than the
declarative one, since we have strictly constrained the way conversion can be used.
The main point of negative soundness is to show that this restriction does not,
in fact, lose anything.

Second, propagating types ``up'' in derivations
lightens the need for type annotations and gives better error messages.
This is important for surface languages, less for kernels,
where terms are typically not human-written.
On the contrary, extra information can be useful in various ways,
and we would rather dispense of the complication of
segregating terms containing too little information to infer. Hence, as
\Coq and \Lean, we adopt a syntax where all terms infer, although
completeness for the other approach is certainly interesting and
feasible~\cite{LennonBertrand2023}.

\begin{figure}
  \begin{mathpar}
    \jform{\bconv*{\Gamma}{t}{t'}[A]}[
      Reduced terms \(t\) and \(t'\) are convertible at type \(A\)]
   
    \inferdef[TFun]
    {\bconv{\Gamma}{A}{A'}[\univ] \\
      \bconv{\Gamma, x : A'}{B}{B'}[\univ]}
    {\bconv*{\Gamma}{\P x : A . B}{\P x : A'. B'}[\univ]}
    \and
    \inferdef[FunExp]
    {\bconv{\Gamma, x : A}{f~x}{f'~x}[B]}
    {\bconv*{\Gamma}{f}{f'}[\P x : A.B]}
    \label{rule:fun-exp}
    \and
    % \inferdef[CNat]
    %   { }
    %   {\bconv*{\Gamma}{\nat}{\nat}[\univ]}
    % \and
    % \inferdef[CZero]
    %   { }
    %   {\bconv*{\Gamma}{0}{0}{\nat}} \and
    \inferdef[TSucc]
      { \bconv{\Gamma}{t}{t'}{\nat}}
      {\bconv*{\Gamma}{\succ(t)}{\succ(t')}{\nat}} \and
    \inferdef[NePos]
    {\isPos T \\ \bnconv{\Gamma}{n}{n'}{S}}
    {\bconv*{\Gamma}{n}{n'}[T]} \and
    \label{rule:neu-pos} \dots \and
    \jformlow{\bconv{\Gamma}{t}{t'}[A]}[Terms \(t\) and \(t'\) are convertible at type \(T\)]
    \jformlow{\bnconv*{\Gamma}{t}{t'}{T}}[
      Neutrals \(t\) and \(t'\) are comparable, inferring the reduced type \(T\)]
    \inferdef[TTmRed]
    {T \red U \\ t \red u \\ t' \red u' \\\\
      \bconv*{\Gamma}{u}{u'}[U]}
    {\bconv{\Gamma}{t}{t'}[T]} \and
    \inferdef[NRed]
      {\bnconv{\Gamma}{n}{n'}{T} \\ T \red S}
      {\bnconv*{\Gamma}{n}{n'}{S}} \\
    \jform{\bnconv{\Gamma}{t}{t'}{T}}[
      Neutrals \(t\) and \(t'\) are comparable, inferring the type \(T\)]
    \inferdef[TVar]
      {(x : T) \in \Gamma}
      {\bnconv{\Gamma}{x}{x}{T}}
    \label{rule:tconv-var}
    \and
    \inferdef[TApp]
      {\bnconv*{\Gamma}{n}{n'}{\P x : A . B} \\ \bconv{\Gamma}{u}{u'}[A]}
      {\bnconv{\Gamma}{n~u}{n'~u'}{\sub{B}{u}}}
      \label{rule:napp}
    \and \dots
    % \inferdef[NNatElim]{
    %   \bnconv{\Gamma}{n}{n'}{\nat} \\
    %   \bconv{\Gamma,x : \nat}{P}{P'} \\
    %   \bconv{\Gamma}{b_{0}}{b'_{0}}[\sub{P}{0}] \\
    %   \bconv{\Gamma,x : \nat, y : \sub{P}{n}}{b_{\succ}}{b'_{\succ}}[\sub{P}{\succ(n)}]}
    %   {\bnconv{\Gamma}{\natelim{n}{x.P}{b_{0}}{x.y.b_{\succ}}}{\natelim{n'}{x.P'}{b'_{0}}{x.y.b'_{\succ}}}{\sub{P}{n}}}
    % \and
  \end{mathpar}

  \caption{\formline{AlgorithmicJudgments.v}{14}{ConvTypeAlg} Typed conversion algorithm
    (excerpt, see \cref{app:algo-typed-conv}).}
  \label{fig:typed-algo}
\end{figure}

\subparagraph{\formline{AlgorithmicJudgments.v}{14}{ConvTypeAlg} Typed conversion is bidirectional}

A point already implicitly present in Abel \etal~\cite{Abel2017}, but
made explicit in Adjedj \etal~\cite{Adjedj2024}, is that the
type-directed conversion algorithm is bidirectional, too. It
is decomposed in two main functions, one to compare general terms and the other
dedicated to neutrals. The former takes a type as input ---it is
\emph{checking}---, while the latter reconstructs types as it traverses
the neutrals ---it is \emph{inferring}.

More precisely, conversion (see \cref{fig:typed-algo}), written
\(\bconv{\Gamma}{t}{u}[A]\), goes as follows:
\begin{enumerate}
  \item reduce \(t\), \(u\) and \(A\) to weak-head normal forms;
  \item if \(A\) is a negative type (\(\P,\Sig\)),
    recursively compare the \(\eta\)-expansions of \(t\) and \(u\);
  \item if \(A\) is a positive type (\(\nat, \univ, \dots\))
   and the terms are canonical, use the relevant congruence;
  \item if \(A\) is a positive type and the terms are neutrals, call neutral comparison.
\end{enumerate}
Steps 2-4 are delegated to an auxiliary function, denoted \(\bconvop{\BooleanTrue}\)
(the ``h'' is for ``head'').
Neutral comparison, written \(\bnconv{\Gamma}{n}{n'}{S}\)
follows their structure, applying congruences as it goes.
% The most interesting rules are collected in.

\ruleref{rule:napp} demonstrates this bidirectional yoga.
As conversion is type-directed, we need a type \(A\) to compare the arguments,
that is obtained through the recursive comparison
of the functions, which must thus be inferring.
We would not be able to construct \(\P x : A.B\) anyway,
even given \(\sub{B}{u}\): inverting substitutions is unfeasible.
% So our way to organise this rule, and the whole judgments, is really the only way.
Thus, even with a syntax with complete inference, the distinction between
checking constructors and inferring eliminators, a landmark of
the ``Nordic'' approach to bidirectionalism \cite{Norell2007,McBride2018},
appears naturally.

\subparagraph{\formline{AlgorithmicJudgments.v}{278}{UConvAlg} Untyped conversion}

\begin{figure}
\begin{mathpar}
  \jformlow{\uconv{t}{t'}}[Terms \(t\) and \(t'\) are convertible]
  \inferdef[UTmRed]
  {t \red u \\ t' \red u' \\
    \uconv*{u}{u'}}
  {\uconv{t}{t'}}
  
  \jformlow{\uconv*{t}{t'}}[
    Reduced terms \(t\) and \(t'\) are convertible]
  
  \inferdef[UUni]
    { }
    {\uconv*{\univ}{\univ}}
  \and
  \inferdef[UFun]
    {\uconv{A}{A'} \\ \uconv{B}{B'}}
    {\uconv*{\P x : A . B}{\P x : A'. B'}}
  \and
  \inferdef[ULamLam]
    {\uconv{t}{t'}}
    {\uconv*{\l x : A. t}{\l x : A'. t'}}
  \label{rule:un-lam}
  \and
  \inferdef[ULamNe]
    {\neu n' \\ \uconv{t}{n'\ x}}
    {\uconv*{\l x : A. t}{n'}}
  \label{rule:un-lam-ne}
  \and
  % \inferdef[CNeLam]
  %   {\neu n \\ \uconv{n\ x}{n'}}
  %   {\uconv*{n}{\l x : A'. t'}}
  % \and
  % \inferdef[CNat]
  %   { }
  %   {\uconv*{\nat}{\nat}}
  % \and
  % \inferdef[CZero]
  %   { }
  %   {\uconv*{0}{0}} \and
  \inferdef[USucc]
    {\uconv{t}{t'}}
    {\uconv*{\succ(t)}{\succ(t')}} \and
  \inferdef[UNeNe]
    {\unconv{n}{n'}}
    {\uconv*{n}{n'}}
    \label{rule:conv-ne-ne}
    \and \dots \and
  \jformlow{\unconv{n}{n'}}[
    Neutrals \(n\) and \(n'\) are comparable]
  \inferdef[UVar]
    { }
    {\unconv{x}{x}}
  \label{rule:uconv-var}
  \and
  \inferdef[UApp]
    {\unconv{n}{n'} \\ \uconv{u}{u'}}
    {\unconv{n\ u}{n'\ u'}}
  \and \dots
  % \inferdef[NNatElim]{
  %   \unconv{n}{n'} \\
  %   \uconv{P}{P'} \\
  %   \uconv{b_{0}}{b'_{0}} \\
  %   \uconv{b_{\succ}}{b'_{\succ}}}
  %   {\unconv{\natelim{n}{x.P}{b_{0}}{x.y.b_{\succ}}}{\natelim{n'}{x.P'}{b'_{0}}{x.y.b'_{\succ}}}}
\end{mathpar}

\caption{\formline{AlgorithmicJudgments.v}{278}{UConvAlg}
  Untyped algorithmic conversion (excerpt, see \cref{app:algo-untyped-conv}).}
\label{fig:untyped-algo}
\end{figure}

Untyped conversion (\cref{fig:untyped-algo})
looks much simpler than typed conversion: there
are no types, no separation between type and term conversion, and the separation
of neutral comparison is kept mostly for readability.
% but could get removed too.

The main point is that we replace the all-encompassing \nameref{rule:fun-exp}
by term-directed rules, an approach pioneered by Coquand,
first for functions \cite{Coquand1996}, then pairs~\cite{Abel2007}.
When comparing \(\l\)-abstractions, we apply the congruence \nameref{rule:un-lam}
---which would also happen in \nameref{rule:fun-exp} after contracting the redexes.
If one side is a \(\l\) and the other a neutral
(\nameref{rule:un-lam-ne}), we again simulate \nameref{rule:fun-exp},
but this time we see the application to a fresh variable is visible on the neutral's side.
And this is all! That is, \emph{there is no \(\eta\)-rule when the two sides are neutrals}.
How could there be? In the extreme case of two variables, the only
information we could use is... type information, which we precisely decided not to rely on.
This discrepancy on neutrals at negative types is the key difference
between the two algorithms. % and will feature prominently in \cref{sec:certification}.

\subparagraph{Comparison}
As a concrete example, we can look at the ``execution trace'' in the following problem:
\(\conv{x : \nat \to \nat}{x}{x}[\nat \to \nat]\).
The untyped algorithm immediately uses \nameref{rule:conv-ne-ne} to go in ``neutral
mode'', and finishes with \nameref{rule:uconv-var}. The typed algorithm, on the other hand,
first \(\eta\)-expands, introducing a fresh variable \(y\),
and recursively compares \(x\ y\) to \(x\ y\). Only then, by reaching a positive type,
does it change to ``neutral mode''. Neutral comparison then
peels off the application node, compares \(x\) to \(x\) again but this time as neutrals,
and finally applies variable reflexivity.
It also needs to compare \(y\) to \(y\) first as normal forms at \(\nat\),
then as neutrals, and finally terminates.
We can already see the work spared by the term-directed algorithm
---exacerbated for \(\Sig\) types, where \(\eta\) introduces
duplication.

% \subparagraph{\formline{Checkers/Functions.v}{333}{wh_red_stack} Reduction}
% We already gave the definition of reduction as an inductive relation. Its implementation
% uses a simple abstract machine, working with a stack, which represents the evaluation
% context we are currently working under ---corresponding to the congruence rules in
% the inductive definition. We use naïve substitutions, so this is in no way efficient,
% but it should be relatively straightforward to replace it by a better variant
% if one so wished.

\section{Properties of the algorithms}
\label{sec:certification}

\begin{figure}
  \begingroup
  \setlength{\fboxsep}{0pt}%  
  \renewcommand{\arraystretch}{1}
  \setlength{\tabcolsep}{8pt}
  \newcommand{\widewedge}{\: \wedge \:}
  \newcommand{\implyq}{\ensuremath{\stackrel{?}{\Rightarrow}}}

  \rowcolors{2}{gray!30}{white}
  \begin{tabular}{ccccc}
    Raw judgment && Precondition && Postcondition \\
    \hline
    \(\inferty{\Gamma}{t}{T}\) & \(\wedge\) & $\ctxtyping{\Gamma}$ &\implyq &
      \(\typing{\Gamma}{t}[T] \widewedge T \text{ unique}\) \\
    $\checkty{\Gamma}{t}{T}$ & \(\wedge\)
      & $\typing{\Gamma}{T}$ &\implyq&
      \(\typing{\Gamma}{t}[T]\)\\
    % % \hline
    $\bconv{\Gamma}{T}{T'}$ &\(\wedge\)&
      $\typing{\Gamma}{T} \widewedge \typing{\Gamma}{T'}$
      & \implyq & \(\conv{\Gamma}{T}{T'}\) \\
    $\bconv{\Gamma}{t}{t'}[T]$ &\(\wedge\)&
      $\typing{\Gamma}{t}[T] \widewedge \typing{\Gamma}{t'}[T]$
      & \implyq & \(\conv{\Gamma}{t}{t}[T]\) \\
    $\bnconv{\Gamma}{t}{t'}{T}$ &\(\wedge\)&
      \makecell{
        \(\neu{t} \widewedge \exists A.\ \typing{\Gamma}{t}[A] \widewedge \) \\
        \(\neu{t'} \widewedge \exists A'.\ \typing{\Gamma}{t'}[A']\)
      }
      & \implyq &
        \makecell{\(\conv{\Gamma}{t}{t'}[T] \widewedge\) \\ \(T \text{ unique}\)} \\
    $\uconv{T}{T'}$ &\(\wedge\)&
      $\exists \Gamma.\ \typing{\Gamma}{T} \widewedge \typing{\Gamma}{T'}$
      & \implyq & \(\conv{\Gamma}{T}{T'}\) \\
    $\uconv{t}{t'}$ &\(\wedge\)&
      $\exists \Gamma\ T.\ \typing{\Gamma}{t}[T] \widewedge \typing{\Gamma}{t'}[T]$
      & \implyq & \(\conv{\Gamma}{t}{t}[T]\) \\
    $\unconv{t}{t'}$ &\(\wedge\)&
      \colorbox{gray!30}{\makecell{
        \(\neu{t} \widewedge \neu{t'} \widewedge\) \\
        \(\exists \Gamma\ A\ A'.\ \typing{\Gamma}{t}[A] \widewedge
        \typing{\Gamma}{t'}[A']\)
      }}
      & \implyq &
      \colorbox{gray!30}{\makecell{
        \(\conv{\Gamma}{t}{t'}[T] \widewedge\) \\ \(T \text{ unique}\)}} \\
  \end{tabular}
\endgroup
  
\caption{\formfile{Algorithmic/Bundled.v}
  Pre- and post-conditions of the algorithmic judgments.
  }
\label{fig:prepost}

\end{figure}

% \subparagraph{Invariants and bundled judgments}
An important piece is
missing from the algorithmic judgments as defined so far: invariants. Indeed,
there is a number of things that are not directly checked.
For instance, \nameref{rule:tconv-var} does not enforce context well-formation,
and \nameref{rule:neu-pos} does not check that the inferred type
coincides with the input one. These are not unfortunate oversight;
instead, we avoid re-doing work that can be obtained for free through invariants,
which is crucial in implementations ---constantly re-checking context well-formation,
for instance, is terrible for performance.

Hence, each function has preconditions, which should
be true before we even dare ask the question that the function ought to answer;
and a postcondition, which is what it is in charge of establishing.
These are summarised in \cref{fig:prepost}. By ``\(T\) unique''
we mean that any other type also satisfying the same property must be convertible
to \(T\).
``Reduced'' judgments, \eg \(\bconv*{\Gamma}{T}{T'}\),
have the same postconditions as their unreduced counterparts,
and the extra precondition that inputs should be weak-head normal forms.

As we can see, although no type is visible in the definition of untyped conversion,
the name ``untyped'' is misleading. We still aim to decide the same typed relation,
but have merely remarked that we can arrange the algorithm in a way that makes type information
superfluous. Yet types are still present in invariants,
silently keeping the algorithm on rails.
Untyped conversion may thus be more aptly characterised as
\emph{term-directed typed conversion}.

\subsection{Properties of typed conversion}

Most results in this section are not \textit{per se} new, although they were significantly
reworked, in particular to precisely track their dependencies. In particular,
termination relies explicitly on normalisation, rather than on the reflexivity
of the logical relation. The work around negative soundness is entirely new.

\subparagraph{Positive soundness}

We split the proof in two, using the inductive judgment
as intermediate. The first half is almost tautological, and does
not depend on any meta-theory.

\begin{proposition}[\formline{Checkers/Soundness.v}{127}{implem_tconv_sound}
  Positive soundness of the typed conversion function]
  \label{prop:sound-typed-fun}
  If a typed conversion function returns a positive answer,
  then the corresponding judgment of \cref{fig:typed-algo} holds.
\end{proposition}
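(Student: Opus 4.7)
The plan is to proceed by functional induction on the recursive call trees of the conversion functions, exploiting the fact that \textsc{PartialFun}, as advertised earlier in the paper, exposes the recursive structure and derives a matching induction principle for free. Because the inductive judgments in \cref{fig:typed-algo} are designed as a direct transcription of the branches of the functions --- each clause of the function corresponds to exactly one rule, and the rule's premises to the function's recursive calls --- the induction reduces to a case-by-case syntactic bookkeeping.

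Concretely, for each mutually recursive function (conversion of types, of terms, of reduced terms, of neutrals inferring a reduced type, and of neutrals inferring an arbitrary type), I would inspect every branch that may return a positive answer and check three things: first, that any guards the branch matches on (being in weak-head normal form, being a neutral, agreeing on the head constructor, etc.) coincide with the side-conditions of the corresponding inductive rule; second, that each recursive call whose positive answer is required for the branch to succeed matches, by the induction hypothesis, the premise of that rule in the right judgment; and third, that the reductions performed explicitly by the function (through the weak-head reduction subroutine) are the very ones appearing as \(t \red u\), \(t' \red u'\), \(T \red U\) in the corresponding rule. For example, the branch comparing two \(\P\)-types in \(\bconvop*\) at type \(\univ\) first checks that both sides have shape \(\P x : A.B\) and \(\P x : A'.B'\), then recursively invokes conversion on the domains and, under an extended context, on the codomains; when it returns positively both recursive calls did, the induction hypothesis supplies \(\bconv{\Gamma}{A}{A'}[\univ]\) and \(\bconv{\Gamma, x : A'}{B}{B'}[\univ]\), and \textsc{TFun} concludes. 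The remaining branches (neutrals at positive types via \nameref{rule:neu-pos}, \(\eta\)-expansion for \(\P\)/\(\Sig\), congruences for \(\succ\), \(0\), application, \texttt{natElim}, etc.) proceed in exactly the same vein.

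The only potential obstacle is purely administrative: one must be careful that the functional induction principle generated by \textsc{PartialFun} lines up cleanly with the mutual structure of the inductive judgments (there are five mutually defined relations on the algorithmic side), and that the small pre-processing operations done by the function --- most notably weak-head reducing inputs before dispatching on their head --- are logged as the \(t \red u\) premises of the corresponding \emph{reduced} judgments rather than silently dropped. Since the function was designed in lockstep with the inductive presentation, no meta-theoretic property is required; this is what makes the statement ``almost tautological'' and independent of injectivity or normalisation.
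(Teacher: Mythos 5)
Your proposal matches the paper's proof: the paper establishes this first half by relating each positive branch of the \textsc{PartialFun}-defined function to the corresponding rule of the inductive judgment via the generated functional induction principle, and explicitly notes that this step is ``almost tautological'' and requires no meta-theoretic assumptions. Your case-by-case bookkeeping, including the handling of the explicit weak-head reduction premises, is exactly the intended argument.
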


\begin{proposition}[\formline{Algorithmic/Bundled.v}{1198}{algo_conv_sound}
  Positive soundness of the typed algorithmic conversion judgment]
  \label{prop:sound-typed-judg}
  Assuming injectivity of type constructors (\cref{prop:inj-ty}), 
  the algorithmic typed conversion judgments
  imply their postconditions whenever their preconditions hold.
\end{proposition}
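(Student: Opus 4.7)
The plan is to proceed by mutual induction on the algorithmic conversion derivations of \cref{fig:typed-algo} --- typed and reduced conversion of types and terms, neutral and reduced neutral comparison --- showing for each rule that, once its preconditions are unpacked, the induction hypotheses deliver the declarative postcondition. Before starting, I would assemble a small toolkit of consequences of \cref{prop:inj-ty}: preservation of typing under weak-head reduction, and inversion of typing for the various congruence heads (e.g.\ if $\typing{\Gamma}{f~u}[T]$, then $\typing{\Gamma}{f}[\P x : A.B]$ and $\typing{\Gamma}{u}[A]$ for some $A$, $B$ with $\conv{\Gamma}{\sub{B}{u}}{T}$). Both are needed to propagate preconditions across recursive calls.

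First I would dispatch the purely structural congruences. For rules such as TFun, TSucc, TVar, and the argument premise of TApp, the preconditions decompose cleanly into preconditions for the recursive calls, the induction hypotheses provide declarative convertibility of the subterms, and the corresponding declarative congruences assemble the result. The $\eta$-rule FunExp is similar once we observe that well-typedness of $f$ and $f'$ at $\P x : A.B$ entails well-typedness of $f~x$ and $f'~x$ at $B$ in the extended context; the IH then yields declarative conversion of the applications at $B$, from which declarative $\eta$ gives conversion of $f$ and $f'$ at the product. The reduction steps TTmRed and NRed rely on preservation: preconditions of the un-reduced judgment transfer to the reduced one, the IH delivers declarative conversion at the reduced type (or between the reduced terms), and the fact that a well-typed term is declaratively convertible to its reduct bridges back to the original forms and type.

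The case that genuinely consumes \cref{prop:inj-ty} is NePos, together with the inferring-into-checking interface in TApp. In NePos the precondition furnishes a positive type $T$ at which both neutrals are well-typed, while the IH on neutral comparison produces a declarative conversion at some inferred type $S$, together with its uniqueness postcondition. Since $T$ is another type at which each neutral is well-typed, uniqueness forces $\conv{\Gamma}{T}{S}$, and declarative type conversion lifts the conversion at $S$ to one at $T$. A mirror image occurs in TApp: after reducing the head's inferred type to a product, injectivity of $\P$ reconciles the $A$ at which the argument is checked with the $A$ exposed by reduction, so that the substituted codomains agree declaratively. The same strategy also yields the uniqueness part of the neutral postcondition, again by appealing to injectivity to match competing inferred types.

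The main obstacle I anticipate is not any individual rule but the mutual bookkeeping: the checking-mode postconditions force one to carry enough invariants through neutral comparison --- uniqueness of the inferred type in particular --- for NePos to close its gap, while the inferring-mode judgments need preconditions strong enough that preservation and congruence inversion actually apply. \cref{prop:inj-ty} is the linchpin unifying these strands: it simultaneously underwrites preservation (keeping preconditions alive across weak-head reduction) and ensures that the type reconstructed bidirectionally for a neutral head is genuinely convertible to the one imposed by the enclosing checking-mode judgment, which is what makes the whole mutual induction close.
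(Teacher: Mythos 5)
Your proposal is correct and follows essentially the same route as the paper: a mutual induction on the algorithmic judgments in which each rule is shown to propagate the preconditions to its premises (this is exactly what the paper's meta-programmed ``local preservation lemmas'' and bundled induction principle package up), with \cref{prop:inj-ty} consumed precisely where you say --- via preservation under weak-head reduction, and via injectivity of \(\P\) to reconcile the inferred type of a neutral head with the type imposed by the surrounding checking-mode judgment. The only difference is organisational: the paper factors the invariant-threading into a reusable induction principle (later reused for negative soundness and termination) and then instantiates it with trivial predicates, whereas you thread the invariants inline.
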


This second half is the interesting one. The crucial part of this proof,
and the backbone of others in this section, is to show
that each inductive rule correctly preserves the invariants.
This is a very regular property, so much that we meta-programmed%
\footnote{
  We hardcoded these in the current code to remove a dependency
  which caused proof engineering issues.
}
the generation of ``local preservation lemmas''
(\formline{Algorithmic/Bundled.v}{290}{Invariants})
for each rule, expressing this idea.
% --embodying what has been called \cite{LennonBertrand2022PhD} ``McBride discipline'',
% after \cite{McBride2018}.
% Injectivity of type constructors appears via
% subject reduction, and in neutral conversion to ``synchronise''
% the types of the left and right neutral.
%
We assemble these lemmas into a
``bundled induction principle'', again programmatically generated,
which threads the invariants through,
giving access to them as extra hypotheses in each induction step.
Soundness follows by instantiating this induction principle with the
constant \texttt{True} predicates.

\subparagraph{Negative soundness}

For negative soundness, we cannot go through the intermediate algorithmic judgments,
as they only encode successful runs of the functions. Thus, we bite the bullet and
directly relate the output of the functions to the declarative judgments.

\begin{proposition}[\formline{Checkers/NegativeSoundness.v}{426}{implem_tconv_sound_neg}
  Negative soundness of typed algorithmic conversion]
  \label{prop:neg-sound-typed}
  Assuming injectivity of type and term constructors
  (Properties \ref{prop:inj-ty}, \ref{prop:inj-nat}, etc.),
  and completeness of neutral conversion \emph{at positive types} (\cref{prop:neu-compl-pos}),
  if a typed conversion function returns a negative answer and its precondition
  holds, then its postcondition cannot hold.
\end{proposition}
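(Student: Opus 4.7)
The plan is to proceed by mutual functional induction on the executions of the three typed-conversion functions (on terms, types, and neutrals), using the induction principle automatically derived from their \textsc{PartialFun} presentation. In each case we assume the function has just returned negatively and that the precondition holds, and assume for contradiction that the declarative postcondition also holds. The functional induction produces one case per failure branch of the algorithm; for failures caused by a recursive sub-call returning negatively, the induction hypothesis states that that sub-call's postcondition cannot hold under its (verified) precondition, which is what we leverage to close the case.

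The cases split in two families. First, ``clash'' failures, where the algorithm has reduced both sides to weak-head normal forms with incompatible heads (e.g.\ $\P x : A.B$ against $\nat$, or $\succ(t)$ against $0$, or a canonical form against a neutral at a positive type). Preservation ---itself a consequence of \cref{prop:inj-ty}--- propagates the hypothesised declarative conversion through the reductions, and then each clash directly contradicts a no-confusion instance: type clashes use \cref{prop:inj-ty}, term clashes at $\nat$ use \cref{prop:inj-nat}, and similarly at the universe. Second, ``propagation'' failures, where the appropriate congruence rule applies but some sub-call fails. Here we apply the injectivity half of the relevant property ---type-constructor injectivity for $\P$, injectivity of $\succ$ at $\nat$, and so on--- to extract, from the head-level declarative conversion, the precise sub-conversion that the induction hypothesis tells us cannot hold. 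In the negative-type cases, where the algorithm $\eta$-expands both sides, injectivity of $\P$ (resp.\ $\Sig$) supplies a well-typed context extension and the typed applications needed to meet the recursive precondition; the IH then yields the contradiction on the expanded bodies.

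The main obstacle is the neutral comparison case, reached by \nameref{rule:neu-pos} when the algorithm compares two neutrals at a positive type $T$. If the neutral sub-procedure fails, we must contradict the hypothesised $\conv{\Gamma}{n}{n'}[T]$. The bridge is precisely \cref{prop:neu-compl-pos}: since $T$ is positive, this declarative conversion yields $\nconv{\Gamma}{n}{n'}{T}$, which we then analyse in lockstep with the failing algorithmic run. A head-symbol mismatch between the two neutral spines, or a variable mismatch at their root, is excluded directly by inspection of the congruence rules generating neutral comparison. For a failing sub-call ---say, the argument comparison of an application--- we extract, from the matching congruence of neutral comparison, the declarative conversions of the paired heads and arguments, and feed them to the IH. The delicate bookkeeping is that the argument type reconstructed by the algorithm from the inferred type of the head must agree with the type appearing in the derivation of neutral comparison: this is where uniqueness of the inferred type, propagated as an invariant through the bundled induction principle, becomes essential.
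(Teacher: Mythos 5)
Your proposal matches the paper's proof: a direct functional induction on the conversion functions (bypassing the inductive algorithmic judgments, which only capture successful runs), threading the pre/postcondition invariants of \cref{fig:prepost} via the local preservation lemmas, refuting clash failures by no-confusion and propagated failures by the injectivity half of \cref{prop:inj-ty,prop:inj-nat} etc., and invoking \cref{prop:neu-compl-pos} exactly where the algorithm enters neutral mode at a positive type. The paper phrases the same idea as maintaining that the ``remaining goals'' stay equivalent to the initial query because only invertible congruences are applied, which is precisely your per-step extraction of the sub-conversion contradicted by the induction hypothesis.
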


Again, we rely on invariant preservation, the proof is otherwise straightforward: we have
extracted exactly the injectivity properties needed to justify each step.
In essence, we show that at any stage the current ``remaining goals''
are equivalent to the initial query,
relying on injectivity to know that we only use invertible congruences,
which are safe to apply.
% Once we encounter an inconsistency, we deduce that the initial query must
% have a negative answer.

\subparagraph{Termination}

As expected, this is where normalisation becomes necessary.
As before, we need invariants throughout, and thus injectivity of type constructors.
The most subtle part of the proof concerns the reduction machine.

\begin{lemma}[%
  \formline{Checkers/ReductionCorrectness.v}{363}{wh_red_complete_whnf_tm}
  Completeness of reduction]
  Assuming \cref{prop:inj-ty}, if
  \(t\) is well-typed and reduces to the weak-head normal form \(t'\),
  then the reduction function applied to \(t\) returns \(t'\).
  % , and similarly for well-formed types.
\end{lemma}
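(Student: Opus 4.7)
The plan is to proceed by the functional induction principle associated to the reduction function, which the free-recursion-monad framework exposes generically. This decomposes the proof along the function's call graph: cases where the input is already a weak-head normal form terminate immediately, while cases for applications, natural-number eliminators, and so on involve a recursive call that computes the weak-head normal form of a sub-term before case-analysing its head and possibly contracting a redex.

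In each recursive case, I would maintain as an invariant that the current argument is well-typed, transferring this along recursive calls via subject reduction (a corollary of \cref{prop:inj-ty}). Matching the function's output to \(t'\) then rests on two observations. First, one-step weak-head reduction \(\ored\) is deterministic, so whenever the function contracts a redex, it contracts exactly the same one as would appear along the declarative reduction \(t \red t'\). Second, for inner recursive calls — such as reducing the head of an application — the classification lemma (\cref{prop:classify}, itself a consequence of \cref{prop:inj-ty}) guarantees that a weak-head normal form at a \(\P\)-type is either a \(\l\)-abstraction or a neutral, so the function's pattern match on the computed head is exhaustive and its two branches align with the \(\beta\)-rule or with propagation through an application spine.

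The main obstacle will be this last alignment: the function does not mimic \(\ored\) one step at a time, but instead fully normalises immediate sub-terms before deciding what to do next, so its ``execution trace'' differs in shape from the sequence \(t \ored t_1 \ored \dots \ored t'\). Reconciling the two amounts to, at each recursive call, splitting the original chain into a prefix of congruence steps (which reduce only inside the head position) followed by a \(\beta\)-step or by stabilisation at a neutral, and then applying the induction hypothesis independently to each piece — each time invoking preservation to keep the well-typedness hypothesis alive, and classification to cross-check the subsequent case analysis. The invariant-preservation machinery already developed for positive soundness should make this bookkeeping quite direct, and no assumption beyond \cref{prop:inj-ty} is needed.
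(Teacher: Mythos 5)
There is a genuine gap, and it is precisely at the point the paper identifies as ``the most subtle part of the proof''. The lemma does not only say that \emph{if} the reduction function answers then it answers \(t'\); it says the function \emph{returns} \(t'\), i.e.\ it terminates on \(t\). The functional induction principle exposed by the free recursion monad follows the recursive structure of an execution, so it can only establish partial correctness: properties of runs that are already known to exist. Using it here presupposes exactly the termination fact the lemma is asserting. Your proposal never says why the function cannot diverge, and none of the ingredients you list (determinism of \(\ored\), subject reduction, \cref{prop:classify}) supplies that.

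The paper's proof instead makes the termination argument the centrepiece: it defines a well-founded order on the configurations of the abstract machine implementing reduction, namely a lexicographic product of the subterm ordering and of reduction, and performs well-founded induction on it. The hypothesis that \(t\) reduces to a weak-head normal form is what makes the reduction component of this order well-founded on the reducts encountered, which is why the lemma can be stated without assuming normalisation and with only \cref{prop:inj-ty} (needed, as you correctly note, to push well-typedness through the recursive calls via preservation). Your remaining observations about aligning the machine's trace with the chain \(t \ored t_1 \ored \dots \ored t'\) are reasonable and compatible with the paper's development, but without the well-founded order on machine configurations the proof does not get off the ground.
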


The main task is to define a well-founded order on the configurations
of the abstract machine implementing reduction,
a lexicographic product of subterm ordering and reduction.
The assumption that \(t\) reduces to a weak-head normal
form ensures that this order is well-founded. Winterhalter introduced a
similar order for \MetaCoq \cite[chap. 23]{Winterhalter2020}.

\begin{proposition}[%
  \formline{Checkers/Termination.v}{455}{tconv_terminates}
  Termination of typed algorithmic conversion]
  Assuming injectivity of type constructors (\cref{prop:inj-ty})
  and deep normalisation (\cref{prop:deep-norm}),
  typed conversion functions terminate whenever their preconditions are
  satisfied.
\end{proposition}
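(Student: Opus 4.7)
The plan is to combine the functional induction principle generated from the algorithm's recursive structure with an induction on the deep normalisation predicate (\cref{def:normalisation}), which was designed precisely to mirror that structure. Deep normalisation of a term at a type asserts reduction to a weak-head normal form, then at negative types deep normalisation of its \(\eta\)-expansion, and at positive types deep normalisation of its subterms, matching exactly the three kinds of recursive calls performed by the typed conversion function. Throughout the induction, \cref{prop:inj-ty} is needed to preserve the preconditions at each recursive call, via the invariant-preservation lemmas already used for positive soundness.

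First I would establish that the weak-head reduction subroutine terminates on every well-typed input. This is the completeness-of-reduction lemma stated just above: the existence of a weak-head normal form, which is the first consequence of \cref{prop:deep-norm}, makes the lexicographic subterm-and-reduction order on abstract machine configurations well-founded, so the machine reaches that normal form in finitely many steps. This discharges the reduction calls made at the beginning of every invocation of the conversion and neutral comparison functions.

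Next I would carry out the main induction, following the recursive structure of the conversion function and using the bundled induction principle to thread preconditions through. At each recursive call, two things must be checked: first, the precondition of the callee holds, which follows from invariant preservation under \cref{prop:inj-ty}; second, the recursive argument is itself deeply normalising, which is extracted by inverting the deep normalisation predicate on the current argument. The \(\eta\)-expansion branch at negative types uses the clause of \cref{def:normalisation} dedicated to negative types; the congruence branches at positive canonical forms use the subterm clause; and the call to neutral comparison at a positive type uses the normalisation of the neutral's subterms, examined in detail below.

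The main obstacle will be the neutral comparison subroutine, which switches from checking to inferring and reconstructs types as it descends, whereas deep normalisation of a neutral only endows its subterms with deep normalisation at the types predicted by the typing witness. To bridge this gap, I would combine classification of normal forms (\cref{prop:classify}), to guarantee that the inferred type always has the expected shape at each elimination, with \cref{prop:inj-ty}, to transport deep normalisation of each subterm across convertible types. A particularly delicate point is the recursive reduction of the inferred type performed inside neutral comparison before switching back to the checking mode: its termination must also be justified by deep normalisation of that type, again obtained through the convertibility it shares with the type provided by the typing derivation, which is itself deeply normalising by \cref{prop:deep-norm}.
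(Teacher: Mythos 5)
Your proposal matches the paper's proof: the paper likewise first establishes termination of the reduction machine via a well-founded lexicographic product of subterm ordering and reduction (whose well-foundedness comes from the existence of weak-head normal forms), and then proves termination of the conversion functions by induction on deep normalisation, whose inductive structure is noted to encode exactly the combination of subterm descent, reduction and \(\eta\)-expansion, threading preconditions through with the local preservation lemmas that require \cref{prop:inj-ty}. Your additional discussion of the neutral-comparison subroutine is a reasonable elaboration of details the paper leaves implicit, not a divergence in method.
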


The proof is by induction on deep normalisation, again using local preservation lemmas.
The inductive structure of deep normalisation readily encodes the combination of subterm,
reduction and \(\eta\)-expansion that one needs to be well-founded.
In contrast, in \MetaCoq, Winterhalter introduced
a complex dependent lexicographic order modulo \cite[chap 24]{Winterhalter2020}, 
and admits that his setup would not easily incorporate \(\eta\)-expansion.

\subparagraph{Non-triviality}
Soundness, both positive and negative, has a default: the nowhere-defined function
is sound! Although we repeat that these properties should not be
read to merely say ``there exists a function such that...''
but to characterise a particular checker ---that we can run on examples
(\formfile{Checkers/Execution.v}) to check for non-triviality---,
it is good to formally ensure that our functions are non-trivial. 
Again, this is split in two.
% First, the functions are complete with respect to the algorithmic judgments.

\begin{proposition}[%
  \formline{Checkers/Completeness.v}{254}{implem_conv_complete_ty}
  Completeness of the implementation]
  Assuming injectivity of type constructors (\cref{prop:inj-ty}),
  if an algorithmic conversion judgment and its precondition hold,
  then the corresponding function answers positively.
\end{proposition}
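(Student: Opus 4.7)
The plan is to proceed by induction on the derivation of the algorithmic conversion judgment, using the same bundled induction principle that powered positive soundness (\cref{prop:sound-typed-judg}). The bundled form threads the preconditions through every inductive step, so at each recursive call we know that the invariants (well-typedness of compared terms, matching types, context well-formation, etc.) hold, and hence that we may legitimately appeal to the induction hypothesis to conclude that the corresponding sub-function answers positively.

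Concretely, for each rule of \cref{fig:typed-algo} (and its untyped counterpart), I would check that the implementation follows the same branch. For a rule like \nameref{rule:napp}, the inductive hypothesis gives positive answers for both premises, and one only has to verify that the function's control flow routes through the ``application'' case. The only two points where the function could \emph{a priori} disagree with the judgment are (i) reduction, when a rule like \nameref{rule:fun-exp} or \textsc{TTmRed} implicitly requires weak-head normalisation, and (ii) the case split between negative and positive types, where the function inspects the weak-head normal form of the type and must pick the branch matching the applicable rule.

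For (i), I would invoke the completeness of the reduction function, which only needs injectivity of type constructors together with the fact that a weak-head normal form does exist (supplied by the derivation itself); this lets the function catch up with the reduction happening in the derivation. For (ii), the function inspects the normal form of the type and dispatches: here no-confusion from \cref{prop:inj-ty} is what guarantees that the type cannot simultaneously be, say, a \(\P\) and a \(\nat\), so that the function and the derivation agree on which branch to take. In \nameref{rule:neu-pos}, the function additionally has to reconcile the inferred type of the neutral with the checked type; injectivity on the type side ensures that the reduced type matches, so the final type-conversion check succeeds by the induction hypothesis.

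The main obstacle, as in the analogous steps for termination and negative soundness, is bookkeeping around invariants: making sure that whenever we hand a subproblem to the function via the induction hypothesis, its precondition is re-established. This is exactly what the meta-programmed local preservation lemmas (\formline{Algorithmic/Bundled.v}{290}{Invariants}) handle, and once they are in place the proof becomes a systematic rule-by-rule verification. The untyped case is structurally identical, with the extra clause \nameref{rule:un-lam-ne} handled by observing that, under the precondition, the neutral side really does have a function type, so applying it to a fresh variable is well-typed and the recursive call's precondition holds.
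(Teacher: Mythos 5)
Your proposal matches the paper's proof (which lives only in the formalisation): a bundled induction over the typed algorithmic derivation, using the local preservation lemmas to re-establish each recursive call's precondition and the completeness of the (deterministic) reduction function so that the implementation computes exactly the weak-head normal forms appearing in the derivation. Two cosmetic corrections that do not affect the argument: the branch dispatch needs only determinism of reduction, not no-confusion, since it inspects a single syntactic whnf; and \nameref{rule:neu-pos} deliberately performs no final comparison between the inferred and the input type, so there is no such check to discharge --- injectivity of type constructors enters only through the invariant lemmas.
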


Next, we show that algorithmic conversion satisfies many closure properties,
which Abel \etal\ call ``generic conversion''.
These are rather rich: all the proofs on the logical
relation side can be done with respect to \emph{any} generic conversion.
\begin{proposition}[%
  \formline{Algorithmic/TypedConvInstances.v}{376}{IntermediateTypingProperties}]
  Algorithmic conversion is a generic conversion, thus for instance symmetric
  and transitive, stable by weakening and
  expansion, congruent for constructors.
  % (\ie if \(t \red t'\), \(\typing{\Gamma}{t}[A]\) and \(\bconv{\Gamma}{t'}{u}[A]\)
  % then \(\bconv{\Gamma}{t}{u}[A]\))
\end{proposition}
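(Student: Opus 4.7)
The plan is to establish each closure property via mutual induction on the algorithmic judgments of \cref{fig:typed-algo}, exploiting the bundled induction principle that threads the preconditions of \cref{fig:prepost} through every inductive step, together with the local preservation lemmas that keep these invariants alive across rule applications. As with positive soundness, the central meta-theoretic ingredient is injectivity of type constructors (\cref{prop:inj-ty}), which is what licenses safely casing on the head of a normal form during the inductive case analyses.

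Weakening and stability by expansion are essentially mechanical. Weakening commutes with everything in the syntactic calculus produced by \AutoSubst, so the induction carries each rule through a renaming unchanged, provided one checks that the preconditions transport under renamings (which they do, by the same preservation machinery used elsewhere). Stability by expansion rests on determinism of weak-head reduction (\cref{fig:reduction-main}): prepending further reduction steps to an input merely lengthens the initial reduction phase of the leading rule, leaving the rest of the derivation intact. Constructor congruences are, modulo verifying preconditions, essentially re-statements of the algorithmic rules themselves.

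Symmetry is proved by a mutual induction swapping both sides of every judgment. Every rule in \cref{fig:typed-algo} is symmetric in shape, so each case reapplies its own symmetric instance to the inductively obtained swapped premises; what needs care is only that the preconditions (well-typedness of both sides of the judgment) are symmetric, which holds by design in \cref{fig:prepost}.

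The main obstacle is transitivity. Given derivations of \(\bconv{\Gamma}{t_1}{t_2}[A]\) and \(\bconv{\Gamma}{t_2}{t_3}[A]\), one first invokes determinism of weak-head reduction to align the normal forms of \(t_2\) and of \(A\) across the two derivations, so that they decompose on matching head constructors. Then \cref{prop:inj-ty} is what rules out heterogeneous combinations: if the ambient type reduces to a \(\P\), both sub-derivations must have applied \nameref{rule:fun-exp} and one chains their premises on \(\eta\)-expansions by induction; if the ambient type is positive, one reconciles canonical--canonical and neutral--neutral subcases, the latter falling back on transitivity of the neutral judgment \(\bnconv{\Gamma}{-}{-}{-}\). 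Here the uniqueness clause in the postcondition of neutral comparison (\cref{fig:prepost}) is the key, since it allows composing two neutral derivations whose inferred types need only agree up to conversion rather than syntactically. Once transitivity is secured, the remaining generic-conversion closure properties assemble from what has already been proven.
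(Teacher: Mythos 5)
Your proposal is correct and follows essentially the same route as the formalisation behind this proposition: induction on the algorithmic derivations of \cref{fig:typed-algo}, threaded with the pre-/post-conditions of \cref{fig:prepost} via the bundled induction principle and local preservation lemmas, with injectivity of type constructors (\cref{prop:inj-ty}) as the meta-theoretic input, and with transitivity as the genuinely delicate case, handled exactly as you describe (determinism of weak-head reduction to align middle terms, injectivity to reconcile the two inferred types of a neutral up to conversion). The one place you are too optimistic is symmetry: the rules are \emph{not} literally symmetric in shape. The congruence for \(\P\) places the codomains in a context extended by the \emph{second} domain, so swapping sides forces a context conversion; and the neutral judgment \emph{infers} a type, so the swapped derivation only returns a type convertible to the original one, which must then be transported across \nameref{rule:napp}. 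Symmetry therefore requires the same up-to-conversion generalisation of the statement (and hence \cref{prop:inj-ty} again) that you correctly identify for transitivity, rather than being a purely mechanical mirror; with that adjustment the argument goes through as you outline it.
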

Compared to declarative typing, the only missing part is congruences for
eliminators, which only hold for neutral conversion,
and reflexivity ---which would be direct by induction if we had all congruence rules.
The point is that eliminator congruences are the dangerous ones
that can trigger reductions: applying a normal form to a normal form does not yield
a normal form.
And indeed a term is deeply normalising exactly when it is reflexive for
(typed) algorithmic conversion, explaining why reflexivity cannot be achieved easily.
% is a mere reformulation of deep normalisation.

\subparagraph{Typing}

All of these propositions extend to bidirectional typing. More precisely, assuming
\emph{any} conversion relation/function between types that satisfies the
relevant property, we can derive a similar one for typing.
This means that the proofs are modular, in that we
can swap out a different implementation of conversion, as long as it satisfies
the right properties, which is exactly what we are about to do.

\subsection{Properties of untyped conversion}
\label{sec:untyped-properties}

The high-level ideas are very similar to typed conversion.
For positive soundness there is really no difference:
the equivalents of Propositions \ref{prop:sound-typed-fun}
(\formline{Checkers/Soundness.v}{197}{implem_uconv_sound})
and \ref{prop:sound-typed-judg}
(\formline{Algorithmic/UntypedConvSoundness.v}{493}{uconv_sound_decl})
hold. Even better, we can reuse most of the local preservation lemmas directly.

Negative soundness is where the subtle difference between the two algorithms shows.
Echoing the discussion in \cref{sec:meta-theory,sec:algorithms}, the key point
is neutrals at negative types. And indeed,
while in \cref{prop:neg-sound-typed} we could make do with the weaker
completeness (\cref{prop:neu-compl-pos}), this time we really need the full power
of completeness at all types.

\begin{proposition}[%
  \formline{Checkers/UntypedNegativeSoundness.v}{652}{implem_uconv_sound_neg}
  Negative soundness of untyped algorithmic conversion]
  Assuming injectivity of type and term constructors
  (Properties \ref{prop:inj-ty}, \ref{prop:inj-nat}, etc.),
  and completeness of neutral conversion \emph{at all types} (\cref{prop:neu-compl}),
  if the untyped conversion function returns a negative answer and its preconditions
  holds, then its postcondition cannot hold.
\end{proposition}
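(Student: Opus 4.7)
The plan is to proceed by functional induction on the untyped conversion function, mirroring the strategy used for \cref{prop:neg-sound-typed} in the typed case. At each recursive step we maintain the invariant that the "remaining comparison" is declaratively equivalent to the initial query under the preconditions, so that any leaf returning a negative answer must contradict the assumption that the initial postcondition holds. The typing invariants are threaded through by reusing (and specialising to the untyped judgments \(\uconv{\cdot}{\cdot}\) and \(\unconv{\cdot}{\cdot}\)) the local preservation lemmas programmatically generated for the typed algorithm.

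For each leaf returning a negative answer we dispatch by the failure mode. If reduction produces weak-head normal forms with incompatible heads, no-confusion (part of \cref{prop:inj-ty}) combined with subject reduction yields the contradiction immediately. If heads match but a recursive subterm comparison fails, we apply the relevant constructor injectivity (Properties~\ref{prop:inj-ty}, \ref{prop:inj-nat}, and their analogues for \(\univ\), \(\Sig\), \(\Id\)): each inverts exactly the congruence rule that triggered. The mixed case \nameref{rule:un-lam-ne}, where one side is a \(\l\) and the other a neutral, reduces to injectivity at \(\P\) after simulating \(\eta\)-expansion by applying a fresh variable and contradicting the failure of the resulting recursive comparison.

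The main obstacle, foreshadowed by this subsection's opening paragraph, lies in the \nameref{rule:conv-ne-ne} branch that delegates to \(\unconv{n}{n'}\). Unlike the typed algorithm, which only reaches neutral comparison at \emph{positive} types (\(\eta\)-expansion having been exhausted at negative ones), the untyped algorithm may be asked to compare two neutrals whose shared declarative type is \emph{negative}. Armed only with \cref{prop:neu-compl-pos}, we would be stuck: an assumed declarative conversion at a negative type would give no direct handle on the algorithm's output. Invoking the full \cref{prop:neu-compl} is precisely what rescues us: from the postcondition we obtain \(\nconv{\Gamma}{n}{n'}{T}\), and a straightforward induction on this declarative neutral comparison, whose rules closely mirror those of \(\unconv\), contradicts the algorithm's negative output. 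This is where the discrepancy between the two algorithms forces the meta-theoretic strengthening from positive types to all types, and it is the crux of the proof relative to \cref{prop:neg-sound-typed}.
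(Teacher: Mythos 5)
Your proposal matches the paper's argument: a (bundled) functional induction reusing the local preservation lemmas, with each failing branch discharged by the corresponding injectivity/no-confusion property, and the neutral--neutral branch handled by invoking completeness of neutral comparison at \emph{all} types since the untyped algorithm compares neutrals at negative types. You correctly identify this last point as the crux distinguishing it from \cref{prop:neg-sound-typed}, which is exactly the emphasis of the paper.
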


As for termination, there is a twist again, of course at negative types. The issue
is that deep normalisation systematically \(\eta\)-expands terms, but untyped conversion
does not always do this. Worse, it is non-deterministic, in the sense that whether
a given term is \(\eta\)-expanded depends on the term it is compared to. Thus, our
termination measure is not exactly fit, and circumventing the issue involves
proving a form of strengthening of the algorithm: it gives the same output when
irrelevant variables are removed from the context.

\begin{proposition}[%
  \formline{Checkers/UntypedTermination.v}{851}{uconv_terminates}
  Termination of untyped algorithmic conversion]
  Assuming \cref{prop:inj-ty} and \cref{prop:deep-norm},
  untyped conversion functions terminate whenever their preconditions hold.
\end{proposition}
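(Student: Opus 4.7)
The plan is to mirror the strategy used for typed conversion: induct on the deep normalisation derivations of the inputs (which exist by \cref{prop:deep-norm} applied to the preconditions), and at each step invoke a local preservation lemma to thread the typing invariants through the recursive calls. Injectivity of type constructors (\cref{prop:inj-ty}) is essential here for two reasons: first, to justify classification (\cref{prop:classify}), so that when reduction stops at a weak-head normal form of the expected type the algorithm can correctly dispatch on its shape; second, to maintain type uniqueness for the neutral sub-algorithm. Termination of the reduction machine is reused verbatim from the typed case, since the reduction function is the same, and its completeness already has \cref{prop:inj-ty} as its only meta-theoretic dependency.

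The main obstacle, as already flagged in the text, lies in the treatment of neutrals at negative types. Deep normalisation is a \emph{typed} predicate: at a function type it forces the term to be \(\eta\)-expanded, regardless of whether that term is canonical or neutral. Untyped conversion, by contrast, only triggers \(\eta\)-expansion via \nameref{rule:un-lam-ne} when one of the two sides is actually a \(\l\)-abstraction. So when we compare two neutrals at a function type, deep normalisation hands us an induction hypothesis about \(n\,x\) and \(n'\,x\) in an extended context, but the algorithm recurses directly on \(n\) and \(n'\) in the original context. The termination measure therefore does not line up: what is structurally smaller in the induction is not what the algorithm actually calls.

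To bridge this gap, the key auxiliary step is a \emph{strengthening} lemma for the algorithm: if \(\unconv{n\,x}{n'\,x}\) terminates with some answer in the extended context, and the fresh variable \(x\) does not occur in \(n\) or \(n'\), then \(\unconv{n}{n'}\) also terminates with the same answer in the original context. This lets me convert the induction hypothesis supplied by deep normalisation (about the \(\eta\)-expansions) into one about the bare neutrals, which is what the algorithm needs. Strengthening is proven by a separate induction on the execution trace of the algorithm in the extended context, tracking that none of the recursive calls genuinely depend on the extra variable — the non-deterministic nature of when \(\eta\)-expansion is triggered means this has to be done case by case, but each case is local and routine.

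With the strengthening lemma in hand, the rest is bookkeeping. At positive types (including neutral types seen through classification) the algorithm's recursive structure matches the subterm decomposition inside deep normalisation directly, so the induction goes through as in the typed setting. At negative types, the reasoning splits on whether each side is canonical or neutral: in the pure \(\l\)/\(\l\) and \(\l\)/neutral cases deep normalisation provides exactly the right hypotheses; in the neutral/neutral case we apply strengthening to reconcile them. The whole argument is packaged by instantiating the bundled induction principle for untyped conversion with termination as the target predicate, exactly as in the typed case, with the strengthening lemma providing the missing glue between the semantic termination measure and the term-directed algorithm.
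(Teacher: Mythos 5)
Your proposal is correct and follows essentially the same route as the paper: induction on deep normalisation using the bundled/local-preservation machinery, with the crux being that the typed, systematically $\eta$-expanding termination measure does not match the term-directed algorithm on neutrals at negative types, which is repaired by a strengthening lemma for the algorithm (irrelevant variables can be removed without changing the output). The paper's own treatment of this proposition is exactly this sketch, so there is nothing to add.
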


\subsection{Equivalence between the algorithms}
\label{sec:algo-equiv}

To wrap up, we can directly compare the two algorithms.
Given how close they are, this should be rather easy, or at any rate
easier than comparing declarative systems ---where the seemingly innocuous
transitivity is actually a source of headaches \cite{Siles2012},
because it means ``real'' untyped conversion can relate well-typed terms through
ill-typed ones, which cannot be easily emulated by its typed counterpart.
In fact, streamlining the relation between typed and untyped conversion
was one of our original motivation
\cite[chap. 6]{LennonBertrand2022PhD} \cite{LennonBertrand2022b}.

\begin{proposition}[%
  \formline{Algorithmic/UntypedTypedConv.v}{39}{bundled_conv_uconv}
  Typed to untyped conversion]
  \label{prop:typed-to-untyped}
  Assuming \cref{prop:inj-ty}, the typed algorithmic judgments
  imply their untyped counterparts whenever their preconditions hold.
\end{proposition}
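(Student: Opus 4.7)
The plan is to proceed by mutual induction on the typed algorithmic derivations, using the ``bundled induction principle'' introduced for positive soundness so that the typed preconditions (well-typedness of inputs, well-formedness of contexts) are threaded through every case. Since every typed precondition entails the corresponding untyped one, it suffices, at each step, to reconstruct an appropriate untyped derivation.

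Most cases are direct translations. The reduction rules TTmRed and NRed become UTmRed, with the type-level reduction simply dropped. The congruences on positive type formers and term constructors (TFun, TSucc, \ldots) map to their untyped counterparts (UFun, USucc, \ldots). Rule NePos, which in the typed system exploits positivity of the target type to delegate to neutral comparison, is subsumed by UNeNe, which is not guarded by a type at all. The neutral congruences TApp, TVar drop their type-tracking and become UApp, UVar.

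The substantive case is FunExp (and, symmetrically, $\Sig$ projection). The typed algorithm eagerly $\eta$-expands at a $\P$-type, reducing $\bconv*{\Gamma}{f}{f'}[\P x : A. B]$ to a recursive call $\bconv{\Gamma, x : A}{f~x}{f'~x}[B]$, whereas the untyped algorithm dispatches on the syntactic shape of the two sides: two $\l$'s via ULamLam, one $\l$ and one neutral via ULamNe, and two neutrals via UNeNe. I would use \cref{prop:classify} ---which needs \cref{prop:inj-ty}--- to conclude that each of $f, f'$ is either a $\l$-abstraction or a neutral, and then dispatch on the four combinations. In the two-neutrals branch, the induction hypothesis yields $\uconv{f~x}{f'~x}$; since both sides are already neutral, no reduction fires, so the derivation must go via UNeNe followed by UApp, from which $\unconv{f}{f'}$ is extracted immediately. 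In the two-$\l$ branch, $f~x$ and $f'~x$ weak-head reduce to the instantiated bodies, and a renaming/anti-substitution argument on the resulting untyped derivation recovers $\uconv{t}{t'}$ on the raw bodies as required by ULamLam. The mixed case combines the two to produce ULamNe.

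The chief obstacle is thus not a deep meta-theoretic fact but this structural mismatch at $\P$-types: extracting $\unconv{f}{f'}$ from $\uconv{f~x}{f'~x}$, or $\uconv{t}{t'}$ from $\uconv{\sub{t}{x}}{\sub{t'}{x}}$, calls for a small inversion/strip lemma on untyped derivations (peeling off a trailing application, or anti-substituting a fresh variable). \Cref{prop:inj-ty} enters only twice: through invariant preservation along the bundled induction, and through \cref{prop:classify} at $\P$- and $\Sig$-types to ensure the shape dispatch is exhaustive. No term-level injectivity is needed, since we are only transporting a positive answer between two algorithms whose choices on canonical and neutral inputs already align structurally.
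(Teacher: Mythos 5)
Your proposal is correct and follows essentially the same route as the paper: bundled induction threading the invariants, \cref{prop:classify} (hence \cref{prop:inj-ty}) to make the shape dispatch at \nameref{rule:fun-exp} exhaustive, direct induction hypotheses up to \(\beta\)-reduction in the abstraction cases, and recovery of \(\unconv{f}{f'}\) in the two-neutral case. The only cosmetic differences are that the paper obtains the neutral case via a simultaneously proven auxiliary statement (typed conversion of two neutrals yields \(\unconv{n}{n'}\)) rather than by inverting the untyped derivation as you do, and that your ``anti-substituting a fresh variable'' step is exactly the strengthening lemma for untyped algorithmic conversion that the paper singles out.
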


\begin{proof}
The proof is by bundled induction on the typed algorithmic judgment. We simultaneously prove 
that if \(\bconv{\Gamma}{n}{n'}[T]\) and moreover \(n\), \(n'\) are neutrals,
then \(\unconv{n}{n'}\). %, and similarly for the other conversion judgments.

The main interesting case is of course that of \nameref{rule:fun-exp}.
In this case, the typing invariants imply, by \cref{prop:classify}, that
both sides are either abstractions or neutrals. In the first three cases, we can directly
conclude by induction hypothesis, up to \(\beta\)-reduction of the \(\eta\)-expanded
abstraction. Remains the case of two neutrals \(n\) and \(n'\). The extra
induction hypothesis for neutrals tells us that \(\unconv{n\ y}{n'\ y}\), or more precisely
in de Bruijn syntax, that \(\unconv{\sub{n}{\uparrow}\ y}{\sub{n'}{\uparrow}\ y}\),
where \(\uparrow\) is the substitution shifting all indices by 1.
We can easily invert this to \(\unconv{\sub{n}{\uparrow}}{\sub{n'}{\uparrow}}\),
and conclude by an auxiliary lemma that untyped conversion admits strengthening.
\end{proof}
Note the final use of strengthening: while this is readily proven by
induction for algorithmic judgments, it is generally
difficult to prove for declarative systems! Admitting easy proofs of strengthening
is a major advantage of algorithmic presentations \cite{LennonBertrand2021}.
% and a nice by-product of a proof of equivalence with a declarative system.

For the converse, there is a catch: normalisation is required.
Indeed, imagine a type \(A\), well-formed
in \(\Gamma\), such that there is an infinite reduction sequence \(A \ored A_1 \ored \dots\)
Trying to deduce \(\conv{\Gamma,x : A}{x}{x}[A]\), the untyped algorithm
would terminate immediately, while the typed one would diverge on the evaluation of \(A\).
More insidiously, if a type \(B\) keeps on exposing \(\P\) constructors
---say a solution to \(B \convop \nat \to B\)--- comparing \(x : B\) to
itself would terminate immediately on one side,
and spin in never ending \(\eta\)-expansions
on the other. We can, however, prove the following crucial lemma.

\begin{lemma}[\formline{Algorithmic/UntypedTypedConv.v}{148}{ne_conv_conv}
  Completeness of typed neutral conversion at all types]
  \label{lem:neu-compl-algo}
  Assuming \cref{prop:deep-norm},
  given a context \(\Gamma\) and \(A\), \(m\), \(n\) respectively a well-formed
  type and well-typed terms in \(\Gamma\), such that \(\bnconv{\Gamma}{m}{n}{A'}\)
  and \(\conv{\Gamma}{A'}{A}\), we have \(\bconv{\Gamma}{m}{n}{A}\).
\end{lemma}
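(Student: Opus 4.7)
The plan is to proceed by induction on the deep normalisation derivation of $A$. The typed conversion algorithm first reduces all three inputs; since $m$ and $n$ are neutrals they are already in weak-head normal form, so the only reduction is of $A$. Writing $\bar{A}$ for the resulting WHNF (which is deeply normalising by assumption, so its subterms are again deeply normalising), we case-split on its shape.

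If $\bar{A}$ is positive (that is, $\nat$, $\univ$, or a neutral type), rule \textsc{NePos} applies directly: the hypothesis $\bnconv{\Gamma}{m}{n}{A'}$ supplies the required neutral comparison, after reducing the inferred type via \textsc{NRed} if needed, and we are done.

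If $\bar{A} = \P x : C. D$ --- the $\Sig$ case is analogous, using the two projections --- rule \textsc{FunExp} reduces the goal to $\bconv{\Gamma, x : C}{m\,x}{n\,x}[D]$. Since $A \convop A'$ and $A$ reduces to a $\P$-type, injectivity and no-confusion of type constructors (\cref{prop:inj-ty}) force $A'$ to reduce as well to some $\P x : C'. D'$ with $C \convop C'$ and $D \convop D'$. After weakening the hypothesis to $\Gamma, x : C$, we obtain $\bnconv*{\Gamma, x : C}{m}{n}{\P x : C'. D'}$. To apply \textsc{TApp} we still need $\bconv{\Gamma, x : C}{x}{x}[C']$: rule \textsc{TVar} provides $\bnconv{\Gamma, x : C}{x}{x}{C}$, and $C$ is deeply normalising as a strict subterm of $\bar{A}$, hence so is $C'$ by preservation; a first invocation of the induction hypothesis at $C'$ then supplies the needed reflexivity. \textsc{TApp} now produces $\bnconv{\Gamma, x : C}{m\,x}{n\,x}{\sub{D'}{x}}$, and a second invocation of the induction hypothesis, this time at the strict subterm $D$ in context $\Gamma, x : C$, closes the case.

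The main obstacle will be the bookkeeping to ensure that each recursive invocation really lands on a strictly smaller deeply-normalising type in the appropriate extended context, and that the preconditions of the invoked typed-conversion judgments --- well-typedness of $m\,x$ and $n\,x$ at the relevant types, and well-formedness of the target --- are re-established after weakening and after stepping under the binder. Once this bookkeeping is in place, the argument is essentially a case-by-case mirror of the algorithm's own recursive structure on $A$.
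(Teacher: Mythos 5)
Your proof follows essentially the same route as the paper's: induction on the deep normalisation of \(A\), \(\eta\)-expanding at negative types until a positive type is reached, then switching to neutral mode and using the neutral-comparison hypothesis weakened to the extended context. The paper only gives this as a sketch --- its formalisation in fact shortcuts the argument via the equivalence of declarative and algorithmic typing --- so the bookkeeping you flag (notably justifying the recursive call at the convertible domain \(C'\) rather than at the genuine subderivation for \(C\)) is exactly where the informal argument is left implicit.
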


\begin{proof}[Proof sketch\footnotemark]
  By induction on the deep normalisation of \(A\), using which,
  after a series of \(\eta\)-expansions of \(m\) and \(n\),
  we end up at a positive type. At this point we go into neutral mode, peel off
  all introduced \(\eta\)-expansions, and use our neutral comparison hypothesis,
  adequately weakened to account for the extra introduced variables, to conclude.
  %
  % In the code, we shortcut this and instead use completeness of algorithmic conversion,
  % a consequence of the logical relation, to circumvent a lengthy induction.
\end{proof}
\footnotetext{In the formalisation, we shortcut this proof using
  the equivalence of declarative and algorithmic typing.}

\begin{proposition}[\formline{Algorithmic/UntypedTypedConv.v}{211}{uconv_tconv}
  Untyped to typed conversion]
  Assuming injectivity of typed constructors and deep normalisation,
  the untyped algorithmic judgments
  imply their typed counterparts whenever their preconditions hold.
\end{proposition}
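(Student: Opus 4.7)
The plan is to mirror the proof of \cref{prop:typed-to-untyped}, proceeding by bundled mutual induction on the untyped algorithmic derivations with the typing invariants threaded through as in the preconditions of \cref{fig:prepost}. The inductive statements would read: for \(\uconv{t}{t'}\) with a common type \(T\) in \(\Gamma\), conclude \(\bconv{\Gamma}{t}{t'}[T]\); for \(\unconv{n}{n'}\) with inferable types, conclude \(\bnconv{\Gamma}{n}{n'}{T}\) with \(T\) convertible to both inferred types. Local preservation lemmas for the untyped rules, generated in the same meta-programmed style as their typed analogues, would supply the invariants at each inductive step; injectivity of type constructors (\cref{prop:inj-ty}) and classification of normal forms (\cref{prop:classify}) would be the essential meta-theoretic ingredients used to pin down the shape of each side at a given type.

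For most rules, the two algorithms march in step. Reduction rules map onto each other via subject reduction, and congruences at positive types transfer almost verbatim, with the bidirectional scaffolding of the typed algorithm (the unique inferred type for neutrals, the discharge of the positivity side-condition in \nameref{rule:neu-pos}) filled in from the invariants rather than from the untyped derivation itself.

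The interesting cases concern negative types, where untyped conversion does not systematically \(\eta\)-expand. For \nameref{rule:un-lam}, well-typedness forces the common reduced type to be some \(\P x : A. B\), and a short \(\beta\)-step aligns the inductive hypothesis on the bodies with \nameref{rule:fun-exp}. The asymmetric \nameref{rule:un-lam-ne} is handled analogously, \(\beta\)-reducing the \(\l\)-side after introducing a fresh variable. The genuinely subtle case is \nameref{rule:conv-ne-ne}: two neutrals \(n\), \(n'\) are compared directly by untyped conversion, but their common type \(T\) may well be a function or pair type ---or an iteration thereof--- which the typed algorithm would process by repeated \(\eta\)-expansions. Here \cref{lem:neu-compl-algo} is exactly the tool needed: the inductive hypothesis supplies \(\bnconv{\Gamma}{n}{n'}{T'}\) with \(\conv{\Gamma}{T'}{T}\), and the lemma lifts this to \(\bconv{\Gamma}{n}{n'}{T}\) by performing the required \(\eta\)-expansions, whose termination is underwritten by deep normalisation of \(T\).

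The main obstacle is precisely this last case, and more generally reconciling the divergent behaviour of the two algorithms on neutrals at iterated negative types, which is what forces the counterexample \(B \convop \nat \to B\) mentioned just before the statement to require normalisation. Once deep normalisation is available and packaged into \cref{lem:neu-compl-algo}, the induction goes through cleanly; without it, the typed algorithm spins in never-ending \(\eta\)-expansions on inputs on which the untyped algorithm terminates immediately, and no amount of invariant bookkeeping can bridge the gap.
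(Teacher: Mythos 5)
Your proposal is correct and follows essentially the same route as the paper: the formal proof is a (bundled) induction on the untyped algorithmic judgment using the local preservation lemmas for invariants, with \cref{lem:neu-compl-algo} doing exactly the work you describe in the \nameref{rule:conv-ne-ne} case, where the two algorithms diverge on neutrals at (iterated) negative types. Your identification of that case as the sole genuinely difficult one, and of deep normalisation as what underwrites it, matches the paper's account.
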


After \cref{lem:neu-compl-algo}, remains a straightforward induction
---as usual, using preservation lemmas as necessary.
However,  let us emphasise that while the proof of equivalence is
relatively direct, it nonetheless relies on difficult properties:
strengthening, and normalisation.

\subsection{Extensions and limits}
\label{sec:discussion}

While our language is simplified compared to real proof assistants,
we believe that the features we include are enough to discover important subtleties,
and that our general methodology would scale to the type theories of \Coq, \Lean or
\Agda.%
\footnote{Barring complex extensions such as \textsc{Cubical Agda}, which would warrant further
investigations.}
The main aspect that we did not cover is definitional uniqueness, \ie
types with one inhabitant up to conversion.
% be it by the addition of a definitional unit type, or proof irrelevance.
Still, in the light of what precedes, we can comment on it.

\subparagraph{Definitional unit} The \(\eta\) rule for the
unit type says that any two inhabitant are convertible:
\begin{mathpar}
  \inferdef[Unit]{\typing{\Gamma}{t}[\unit] \\ \typing{\Gamma}{t'}[\unit]}
    {\conv{\Gamma}{t}{t'}[\unit]}
  \label{rule:unit}
\end{mathpar}
In a sense, this is the ultimate type-directed rule. Accordingly, it 
completely wrecks completeness of neutral comparison, since \emph{any} two neutrals
are convertible. This also ``leaks'' to other negative types via their own
\(\eta\)-laws. For instance, \(\nat \to (\Sig x : \unit . \unit)\) is
also ``unit-like'', all its elements being convertible,
and neutral comparison is likewise incomplete.

Different strategies are adopted to circumvent this issue in major proof assistants, where
\(\unit\) typically appears as a record type with no fields.
\Agda is the least affected, as its conversion is type-directed, although the subtle
behaviour of unit-like types is a regular source of bugs \cite{EtaUnitAgda4,EtaUnitAgda3,EtaUnitAgda1,EtaUnitAgda2}. \Coq instead
commits to untyped conversion, but must therefore enforce all record types
to have at least one relevant field, forbidding the definitional unit type.
\Lean takes an intermediate solution, re-inferring
the type of neutrals on the fly if their untyped comparison fails, and implements a dedicated
criterion to detect which types are ``unit-like'',%
\footnote{Although this criterion is currently incomplete, causing issues in the type-checker
  \cite{EtaUnitLean}.}
a strategy also explored by Kovács in normalisation by evaluation \cite{Kovacs2025}.

\subparagraph{Strict propositions}
The universe of strict propositions \cite{Gilbert2019}---called \texttt{Prop}
in \Agda and \Lean, and \texttt{SProp} in \Coq, which we write \(\SProp\)---,
reflects the idea of propositions being proof irrelevant: any two proofs
are definitionally equal. Concretely, the rule is as follows:
\begin{mathpar}
  \inferdef[SProp]{\typing{\Gamma}{t}[P] \\ \typing{\Gamma}{t'}[P] \\ \typing{\Gamma}{P}[\SProp]}
    {\conv{\Gamma}{t}{t'}[P]}
  \label{rule:sprop}
\end{mathpar}
This is very similar to definitional unit,
and of course comes with similar threat to completeness of neutral comparison.
\Agda's conversion, being type-directed---and already maintaining universe information---,
was easily extended. In \Lean, the same strategy
as for unit ---re-inferring types in case of failure--- is used. In combination with heavy
sharing and little type-level computation, this seems enough to keep decent performances.

\Coq, however, does something special to remain purely term-directed \cite{Gilbert2020,Leray2022}.
The core remark is that there is an important difference between \nameref{rule:sprop}
and \nameref{rule:unit}:
computing types is in general costly ---as one needs to compute substitutions,
evaluate terms, etc.--- but merely maintaining information about the \emph{universe} is much cheaper.
Moreover, since strict propositions form a universe of definitionally irrelevant types,
by construction closed under as many operations as possible,
the proliferation is contained to \SProp{} only.
For instance any function type with codomain in \SProp{} is again in \SProp{}.
Thus, universe information can be used to implement \nameref{rule:sprop} in a cheap
yet complete way, and this is the strategy used by \Coq.

\subparagraph{\(\eta\) and strengthening}
In the example of \cref{sec:algorithms},
% both algorithms end up comparing the variable \(x\) to itself,
% but in
the type-directed algorithm introduces a fresh variable to the context while the
untyped one does not.
This discrepancy between the algorithms explains the need for strengthening
in \cref{sec:meta-theory,prop:typed-to-untyped}.

While we do not have an example of a type theory that would stress
the equivalence by breaking strengthening, remark that in extensional type theory, where any two
terms are convertible in an inconsistent context and strengthening fails, to deduce
\(\conv{x,y : \Empty \to \nat}{x}{y}[\Empty \to \nat]\) it is necessary to go through
\(\eta\)-expansion to introduce a variable \(z : \Empty\) in the context.
Extensional type theory is undecidable,
thus less interesting to us, but a similar phenomenon happens in cubical type theory.
In \textsc{Cubical Agda}, elements of the type \texttt{Partial i0 nat}
behave as functions \texttt{isOne i0 -> nat}, but are compared
only on the geometric condition encoded by \texttt{isOne i0}.
Since this condition is degenerate, any two variables of type \texttt{Partial i0 -> nat}
are convertible, but only through a form of \(\eta\)-expansion.
In other words, strengthening fails with respect to a variable of type \texttt{isOne i0}.

For this and similar reasons, as far as we can tell it is widely believed that
cubical systems cannot be implemented using a purely term-directed conversion checker,
and given this hard failure of strengthening and neutral completeness, we are inclined
to follow this belief.

\section{Conclusion, related and future work}
\label{sec:conclusion}
\label{sec:related}

\subparagraph{Similar formalisations}
We are neither the only, nor the first, to work on formalising the meta-theory of
a dependent type system. These formalisations roughly fall into two groups.
The first \cite{Abel2017,Wieczorek2018,Hu2023,Adjedj2024,Jang2025,Liu2025}
tackles systems roughly the same complexity as ours,
and focus on meta-theory via varying forms of logical relations.
% The closest to us is probably \textsc{McTT}
% \cite{Jang2025}, which also insists on the separation of meta-theory from a certified
% executable type-checker. The main difference between our two work is that our implementation
% of conversion is based on an abstract machine, while they use normalisation by evaluation,
They all prove everything all they need in one go via the logical relation, and thus
do not attempt a detailed analysis of dependency between properties.
% feature the ``gradual''
% % \footnote{Not in the sense of gradual typing!}
% approach we take.

A second group focuses on describing and specifying realistic
type checkers. \Leanf's \cite{Carneiro2024} checker is on par with \Lean's kernel,
but only describes its intended type system without relating it to the checker,
and develops only minimal meta-theory.
\Agdac \cite{Liesnikov2025} defines a complex system modelled after \Agda,
and provide a checker returning typing derivations, thus ensuring
positive soundness by construction. They also do not consider meta-theory or properties
beyond positive soundness.
Only \MetaCoq \cite{MetaCoq2024} develops a comprehensive meta-theory,
for a type theory close to \Coq's, backing a certified sound,
complete and terminating checker ---up to
normalisation, which is axiomatised. We hope that the present work
clarifies the role of meta-theory in these projects: we encourage them
to adopt the ``meta-theory as black box'' motto, and focus on kernel certification
against a handful of axiomatised properties expected to hold ---similar to \MetaCoq's approach
to normalisation.

\subparagraph{Avoiding normalisation}
A wealth of approaches exist to establish injectivity properties independently of normalisation.
Confluence is powerful and scalable, as demonstrated by \MetaCoq~\cite{MetaCoq2024}.
% , which adapts Takahashi's version \cite{Takahashi1995} of the venerable technique of parallel
% reduction due to Tait and Martin-Löf.
Parallel reduction has been adapted to
typed conversion \cite{Adams2006,Siles2012} and \(\eta\) for
functions and pairs~\cite{Stoevring2006}, although neither work
tackle the combination of the two.
% Yet, having clearly delineated what are the key properties of interest,
% a natural question is whether we can find a confluence-like proof. % to prove them.
%
A different approach is taken by Coquand and Huber \cite{Coquand2018a},
who use a semantic in domains to obtain rich injectivity properties for
a type theory very close to ours.
% They do not cover \(\Sig\) types and have
% only natural numbers as inductive types, but their method should readily generalise.
The key point is that they carry out their construction in a weak ambient theory,
although they rely on the stratification of universes,
which shows that the logical power of injectivity properties is low.
%  which is encouraging.

Another approach to avoid normalisation is that of Abel and Altenkirch
\cite{Abel2011}, who show a coinductive form of completeness which makes no promise
about termination. They also put forward confluence, injectivity of type constructors,
and \emph{standardisation}, which gives a form of completeness
of weak-head reduction: if a term is convertible to a weak-head normal form, it
(weak-head) reduces to a weak-head normal form with the same shape.
This also appeared naturally in our formalisation, although less prominently than injectivity.

\subparagraph{Intrinsic typing}
An alternative approach to meta-theory \cite{Sterling2021a,Bocquet2023}
is based on ``intrinsic'' typing, where well-typed terms are defined
directly as an initial algebra or quotient inductive-inductive type (QIIT) \cite{Kaposi2019a},
letting one use powerful semantic tools to prove syntactic properties.
We would like to yield this powerful hammer, yet important hurdles remain.

In the intrinsic approach, syntax is by construction quotiented by conversion,
which makes it difficult to draw finer-grained distinctions.
Thus, while expressing injectivity of type constructors is straightforward,
neutral comparison, and thus \cref{prop:neu-compl,prop:neu-compl-pos},
is more problematic. Accordingly, these approaches rephrase decidability
in a ``reduction-free'' manner. %, altogether avoiding normalisation in the way we phrase it.
It seems difficult to argue about termination of our function
based only on such conversion-invariant properties, we likely need something more intensional. 
A way out might be to reason about conversion proofs
as elements of the meta-level identity type of object terms, although this is impossible
with current QIITs, which are Set-truncated. % trivialising this path space.
A final issue is that we ultimately wish to certify an implementation acting on raw
terms, rather than intrinsic syntax. Relating the two amounts to the initiality
theorem \cite{Lumsdaine2020}, which is no small feat!
% Finally and more down to earth, the formalisation of these advanced techniques, especially
% at the scale of a formalisation like \MetaCoq, seems far away
% ---less ``high-tech'' approaches seem easier to scale.

\subparagraph{Future work}
Given the centrality of injectivity properties, a natural future direction
is to develop new techniques focused on them,
especially ones that would scale to the complexity of real kernels. Even if
a normalisation proof for \Coq, \Lean or \Agda is a major endeavour not likely to manifest soon
---and will always have to circumvent Gödelian limitations---,
this seems more reachable, although we are not here yet. Whether confluence-like techniques
will suffice, or whether we will need more semantic approaches, remains to be explored.

On the other side of implementation,
we have started a minimal exploration of the world of conversions, by showing
we can tackle untyped conversion even against a typed specification. Our implementation
is rather naïve, and \textsc{PartialFun} is currently not geared towards good extraction.
Exploring the certification of efficient, optimised conversion checkers, inspired by those
explored ---and certified!--- by Courant \cite{Courant2024}, or the
work of Kovács on normalisation by evaluation \cite{Kovacs2024} is also an interesting
aspiration.

%%
%% Bibliography
%%

%% Please use bibtex, 

\bibliography{biblio}

\begin{thebibliography}{10}

\bibitem{Abel2011}
Andreas Abel and Thorsten Altenkirch.
\newblock A partial type checking algorithm for type:type.
\newblock {\em Electronic Notes in Theoretical Computer Science}, 229(5):3--17, 2011.
\newblock Proceedings of the Second Workshop on Mathematically Structured Functional Programming (MSFP 2008).
\newblock URL: \url{https://www.sciencedirect.com/science/article/pii/S1571066111000508}, \href {https://doi.org/10.1016/j.entcs.2011.02.013} {\path{doi:10.1016/j.entcs.2011.02.013}}.

\bibitem{Abel2007}
Andreas Abel and Thierry Coquand.
\newblock Untyped algorithmic equality for {Martin-L\"of's} logical framework with surjective pairs.
\newblock {\em Fundamenta Informaticae}, 77(4):345--395, 2007.
\newblock {TLCA'05} special issue.
\newblock URL: \url{http://fi.mimuw.edu.pl/abs77.html#15}.

\bibitem{Abel2017}
Andreas Abel, Joakim \"{O}hman, and Andrea Vezzosi.
\newblock Decidability of conversion for type theory in type theory.
\newblock {\em Proc. ACM Program. Lang.}, 2(POPL), December 2017.
\newblock \href {https://doi.org/10.1145/3158111} {\path{doi:10.1145/3158111}}.

\bibitem{Abrahamsson2022}
Oskar Abrahamsson, Magnus~O. Myreen, Ramana Kumar, and Thomas Sewell.
\newblock {Candle: A Verified Implementation of HOL Light}.
\newblock In June Andronick and Leonardo de~Moura, editors, {\em 13th International Conference on Interactive Theorem Proving (ITP 2022)}, volume 237 of {\em Leibniz International Proceedings in Informatics (LIPIcs)}, Dagstuhl, Germany, 2022. Schloss Dagstuhl -- Leibniz-Zentrum f{\"u}r Informatik.
\newblock URL: \url{https://drops.dagstuhl.de/opus/volltexte/2022/16712}, \href {https://doi.org/10.4230/LIPIcs.ITP.2022.3} {\path{doi:10.4230/LIPIcs.ITP.2022.3}}.

\bibitem{Adams2006}
Robin Adams.
\newblock Pure type systems with judgemental equality.
\newblock {\em J. Funct. Program.}, 16(2):219--246, 2006.
\newblock \href {https://doi.org/10.1017/S0956796805005770} {\path{doi:10.1017/S0956796805005770}}.

\bibitem{EtaUnitLean}
Arthur Adjedj.
\newblock Issue \#2258: Defeq transitivity failures for eta.
\newblock GitHub issue.
\newblock URL: \url{https://github.com/leanprover/lean4/issues/2258}.

\bibitem{Adjedj2024}
Arthur Adjedj, Meven Lennon-Bertrand, Kenji Maillard, Pierre-Marie P\'{e}drot, and Lo\"{\i}c Pujet.
\newblock Martin-l\"{o}f \`{a} la coq.
\newblock In {\em Proceedings of the 13th ACM SIGPLAN International Conference on Certified Programs and Proofs}, CPP 2024, pages 230--245, New York, NY, USA, 2024. Association for Computing Machinery.
\newblock \href {https://doi.org/10.1145/3636501.3636951} {\path{doi:10.1145/3636501.3636951}}.

\bibitem{EtaUnitAgda4}
Antoine Allioux.
\newblock Issue \#5174: Rewrite rules and eta-expansion of the unit type.
\newblock GitHub issue.
\newblock URL: \url{https://github.com/agda/agda/issues/5174}.

\bibitem{Barendregt2001}
Henk Barendregt and Herman Geuvers.
\newblock Proof-assistants using dependent type systems.
\newblock In Alan Robinson and Andrei Voronkov, editors, {\em Handbook of Automated Reasoning}, Handbook of Automated Reasoning, pages 1149--1238. North-Holland, Amsterdam, 2001.
\newblock URL: \url{https://www.sciencedirect.com/science/article/pii/B9780444508133500205}, \href {https://doi.org/10.1016/B978-044450813-3/50020-5} {\path{doi:10.1016/B978-044450813-3/50020-5}}.

\bibitem{Barras1997}
Bruno Barras and Benjamin Werner.
\newblock Coq in coq.
\newblock Unpublished manuscript, 1997.
\newblock URL: \url{http://www.lix.polytechnique.fr/Labo/Bruno.Barras/publi/coqincoq.pdf}.

\bibitem{Bauer2017}
Andrej Bauer, Jason Gross, Peter~LeFanu Lumsdaine, Michael Shulman, Matthieu Sozeau, and Bas Spitters.
\newblock The hott library: a formalization of homotopy type theory in coq.
\newblock In {\em Proceedings of the 6th ACM SIGPLAN Conference on Certified Programs and Proofs}, CPP 2017, pages 164--172. Association for Computing Machinery, 2017.
\newblock \href {https://doi.org/10.1145/3018610.3018615} {\path{doi:10.1145/3018610.3018615}}.

\bibitem{Bocquet2023}
Rafa{\"{e}}l Bocquet, Ambrus Kaposi, and Christian Sattler.
\newblock For the metatheory of type theory, internal sconing is enough.
\newblock In Marco Gaboardi and Femke van Raamsdonk, editors, {\em 8th International Conference on Formal Structures for Computation and Deduction, {FSCD} 2023}, volume 260 of {\em LIPIcs}, pages 18:1--18:23. Schloss Dagstuhl - Leibniz-Zentrum f{\"{u}}r Informatik, 2023.
\newblock \href {https://doi.org/10.4230/LIPICS.FSCD.2023.18} {\path{doi:10.4230/LIPICS.FSCD.2023.18}}.

\bibitem{Carneiro2024}
Mario Carneiro.
\newblock Lean4lean: Towards a formalized metatheory for the lean theorem prover, March 2024.
\newblock \href {https://arxiv.org/abs/2403.14064} {\path{arXiv:2403.14064}}, \href {https://doi.org/10.48550/ARXIV.2403.14064} {\path{doi:10.48550/ARXIV.2403.14064}}.

\bibitem{EtaUnitAgda3}
Jesper Cockx.
\newblock Issue \#3785: Comparison of blocked terms doesn't respect eta.
\newblock GitHub issue.
\newblock URL: \url{https://github.com/agda/agda/issues/3785}.

\bibitem{EtaUnitAgda1}
Jesper Cockx.
\newblock Issue \#5837: Occurs check does not properly handle singleton type.
\newblock GitHub issue.
\newblock URL: \url{https://github.com/agda/agda/issues/5837}.

\bibitem{Coquand1996}
Thierry Coquand.
\newblock An algorithm for type-checking dependent types.
\newblock {\em Science of Computer Programming}, 26(1), 1996.
\newblock URL: \url{http://www.sciencedirect.com/science/article/pii/0167642395000216}, \href {https://doi.org/10.1016/0167-6423(95)00021-6} {\path{doi:10.1016/0167-6423(95)00021-6}}.

\bibitem{Coquand2018a}
Thierry Coquand and Simon Huber.
\newblock An adequacy theorem for dependent type theory.
\newblock {\em Theory of Computing Systems}, 63(4):647--665, July 2018.
\newblock \href {https://doi.org/10.1007/s00224-018-9879-9} {\path{doi:10.1007/s00224-018-9879-9}}.

\bibitem{Courant2024}
Nathanaëlle Courant.
\newblock {\em Towards an efficient and formally verified convertibility checker}.
\newblock PhD thesis, Université Paris Cité, 2024.

\bibitem{Dapprich2021}
Adrian Dapprich.
\newblock {Generating Infrastructural Code for Terms with Binders using MetaCoq and OCaml}.
\newblock Bachelor thesis, Saarland University, 2021.

\bibitem{Dunfield2021}
Jana Dunfield and Neel Krishnaswami.
\newblock Bidirectional typing.
\newblock {\em ACM Computing Surveys}, 54(5), May 2021.
\newblock \href {https://doi.org/10.1145/3450952} {\path{doi:10.1145/3450952}}.

\bibitem{Gilbert2019}
Ga{\"e}tan Gilbert, Jesper Cockx, Matthieu Sozeau, and Nicolas Tabareau.
\newblock Definitional proof-irrelevance without k.
\newblock {\em Proceedings of the ACM on Programming Languages}, 3(POPL):1--28, January 2019.
\newblock URL: \url{https://hal.inria.fr/hal-01859964}, \href {https://doi.org/10.1145/329031610.1145/3290316} {\path{doi:10.1145/329031610.1145/3290316}}.

\bibitem{Gilbert2020}
Gaëtan Gilbert.
\newblock {\em A type theory with definitional proof-irrelevance}.
\newblock PhD thesis, Université de Nantes, 2020.

\bibitem{Hu2023}
Jason Z.~S. Hu, Junyoung Jang, and Brigitte Pientka.
\newblock Normalization by evaluation for modal dependent type theory.
\newblock {\em Journal of Functional Programming}, 33, 2023.

\bibitem{Jang2025}
Junyoung Jang, Jason Z.~S. Hu, Antoine Gaulin, and Brigitte Pientka.
\newblock Mctt: Building a correct-by-construction proof checkerfor martin-löf type theory.
\newblock In {\em 4th Workshop on the Implementation of Type Systems}, 2025.

\bibitem{EtaUnitAgda2}
Ambrus Kaposi.
\newblock Issue \#6417: Forward declaration breaks eta for unit type.
\newblock GitHub issue.
\newblock URL: \url{https://github.com/agda/agda/issues/6417}.

\bibitem{Kaposi2019a}
Ambrus Kaposi, Andr{\'a}s Kov{\'a}cs, and Thorsten Altenkirch.
\newblock Constructing quotient inductive-inductive types.
\newblock {\em Proceeding of the ACM on Programming Languages}, 3(POPL), January 2019.
\newblock \href {https://doi.org/10.1145/3290315} {\path{doi:10.1145/3290315}}.

\bibitem{Kovacs2024}
András Kovács.
\newblock Efficient evaluation with controlled definition unfolding.
\newblock In {\em 3rd Workshop on the Implementation of Type Systems}, 2024.

\bibitem{Kovacs2025}
András Kovács.
\newblock Eta conversion for the unit type (is still not that simple).
\newblock In {\em 4th Workshop on the Implementation of Type Systems}, 2025.

\bibitem{LennonBertrand2021}
Meven Lennon-Bertrand.
\newblock {Complete Bidirectional Typing for the Calculus of Inductive Constructions}.
\newblock In Liron Cohen and Cezary Kaliszyk, editors, {\em 12th International Conference on Interactive Theorem Proving (ITP 2021)}, volume 193 of {\em Leibniz International Proceedings in Informatics (LIPIcs)}. Schloss Dagstuhl -- Leibniz-Zentrum f{\"u}r Informatik, 2021.
\newblock URL: \url{https://drops.dagstuhl.de/opus/volltexte/2021/13919}, \href {https://doi.org/10.4230/LIPIcs.ITP.2021.24} {\path{doi:10.4230/LIPIcs.ITP.2021.24}}.

\bibitem{LennonBertrand2022PhD}
Meven Lennon-Bertrand.
\newblock {\em Bidirectional Typing for the Calculus of Inductive Constructions}.
\newblock phdthesis, Nantes Université, 2022.

\bibitem{LennonBertrand2022b}
Meven Lennon-Bertrand.
\newblock Equivalence between typed and untyped algorithmic conversions.
\newblock In {\em 28th International Conference on Types for Proofs and Programs}, 6 2022.
\newblock URL: \url{https://types22.inria.fr/programme/}.

\bibitem{LennonBertrand2022a}
Meven Lennon-Bertrand.
\newblock {À bas l'$\eta$ -- Coq's troublesome $\eta$-conversion}.
\newblock In {\em 1st Workshop on the Implementation of Type Systems}, 2022.

\bibitem{FSCDVersion}
Meven Lennon-Bertrand.
\newblock {What Does It Take to Certify a Conversion Checker?}
\newblock In Maribel Fern\'{a}ndez, editor, {\em 10th International Conference on Formal Structures for Computation and Deduction (FSCD 2025)}, volume 337 of {\em Leibniz International Proceedings in Informatics (LIPIcs)}, pages 27:1--27:23. Schloss Dagstuhl -- Leibniz-Zentrum f{\"u}r Informatik, 2025.
\newblock \href {https://doi.org/10.4230/LIPIcs.FSCD.2025.27} {\path{doi:10.4230/LIPIcs.FSCD.2025.27}}.

\bibitem{ArxivVersion}
Meven Lennon-Bertrand.
\newblock What does it take to certify conversion?, 2025.
\newblock URL: \url{https://arxiv.org/abs/2502.15500}, \href {https://arxiv.org/abs/2502.15500} {\path{arXiv:2502.15500}}.

\bibitem{LennonBertrand2023}
Meven Lennon-Bertrand and Neel Krishnaswami.
\newblock Decidable type-checking for bidirectional martin-löf type theory.
\newblock In {\em 29th International Conference on Types for Proofs and Programs}, 2023.
\newblock URL: \url{https://types2023.webs.upv.es/}.

\bibitem{Leray2022}
Yann Leray.
\newblock {Formalisation et impl{\'e}mentation des propositions strictes dans MetaCoq}.
\newblock Technical report, {Inria Rennes - Bretagne Atlantique ; LS2N-Nantes Universit{\'e}}, August 2022.
\newblock URL: \url{https://inria.hal.science/hal-04433492}.

\bibitem{Liesnikov2025}
Bohdan Liesnikov and Jesper Cockx.
\newblock Building a correct-by-construction type checker for a dependently typed core language.
\newblock In Oleg Kiselyov, editor, {\em Programming Languages and Systems}, pages 63--83, Singapore, 2025. Springer Nature Singapore.

\bibitem{Liu2025}
Yiyun Liu, Jonathan Chan, and Stephanie Weirich.
\newblock Consistency of a dependent calculus of indistinguishability.
\newblock {\em Proceeding of the ACM of Programming Languages}, 9(POPL), January 2025.
\newblock \href {https://doi.org/10.1145/3704843} {\path{doi:10.1145/3704843}}.

\bibitem{Lumsdaine2020}
Peter~LeFanu Lumsdaine, Guillaume Brunerie, Menno de~Boer, and Anders Mörtberg.
\newblock Initiality for martin-löf type theory.
\newblock Talk at the HoTTest seminar, 2020.

\bibitem{MartinLoef1984}
Per Martin-Löf and Giovanni Sambin.
\newblock {\em Intuitionistic Type Theory}.
\newblock Number~1 in Studies in Proof Theory. {Napoli: Bibliopolis}, 1984.

\bibitem{McBride2015}
Conor McBride.
\newblock Turing-completeness totally free.
\newblock In Ralf Hinze and Janis Voigtländer, editors, {\em Mathematics of Program Construction}, pages 257--275, Cham, 2015. Springer International Publishing.

\bibitem{McBride2018}
Conor McBride.
\newblock Basics of bidirectionalism.
\newblock Blog post, August 2018.
\newblock URL: \url{https://pigworker.wordpress.com/2018/08/06/basics-of-bidirectionalism/}.

\bibitem{Norell2007}
Ulf Norell.
\newblock {\em Towards a practical programming language based on dependent type theory}.
\newblock PhD thesis, Department of Computer Science and Engineering, Chalmers University of Technology, September 2007.

\bibitem{Siles2012}
Vincent Siles and Hugo Herbelin.
\newblock Pure type system conversion is always typable.
\newblock {\em J. Funct. Program.}, 22(2):153--180, 2012.
\newblock \href {https://doi.org/10.1017/S0956796812000044} {\path{doi:10.1017/S0956796812000044}}.

\bibitem{MetaCoq2024}
Matthieu Sozeau, Yannick Forster, Meven Lennon-Bertrand, Jakob Nielsen, Nicolas Tabareau, and Th\'{e}o Winterhalter.
\newblock Correct and complete type checking and certified erasure for coq, in coq.
\newblock {\em Journal of the ACM}, November 2024.
\newblock \href {https://doi.org/10.1145/3706056} {\path{doi:10.1145/3706056}}.

\bibitem{Stark2020}
Kathrin Stark.
\newblock {\em Mechanising syntax with binders in Coq}.
\newblock PhD thesis, Saarland University, Saarbr{\"{u}}cken, Germany, 2020.
\newblock URL: \url{https://publikationen.sulb.uni-saarland.de/handle/20.500.11880/28822}.

\bibitem{Sterling2021a}
Jonathan Sterling.
\newblock {\em {First Steps in Synthetic Tait Computability: The Objective Metatheory of Cubical Type Theory}}.
\newblock PhD thesis, Carnegie Mellon University, November 2021.
\newblock \href {https://doi.org/10.5281/zenodo.6990769} {\path{doi:10.5281/zenodo.6990769}}.

\bibitem{Stoevring2006}
Kristian Stoevring.
\newblock Extending the extensional lambda calculus with surjective pairing is conservative.
\newblock {\em Logical Methods in Computer Science}, Volume 2, Issue 2, 2006.
\newblock URL: \url{https://lmcs.episciences.org/2249}, \href {https://doi.org/10.2168/LMCS-2(2:1)2006} {\path{doi:10.2168/LMCS-2(2:1)2006}}.

\bibitem{Takahashi1995}
M.~Takahashi.
\newblock Parallel reductions in $\lambda$-calculus.
\newblock {\em Information and Computation}, 118(1):120--127, 1995.
\newblock URL: \url{https://www.sciencedirect.com/science/article/pii/S0890540185710577}, \href {https://doi.org/10.1006/inco.1995.1057} {\path{doi:10.1006/inco.1995.1057}}.

\bibitem{CDT2024}
{The Coq Development Team}.
\newblock The coq proof assistant, June 2024.
\newblock \href {https://doi.org/10.5281/zenodo.14542673} {\path{doi:10.5281/zenodo.14542673}}.

\bibitem{Wieczorek2018}
Pawel Wieczorek and Dariusz Biernacki.
\newblock A coq formalization of normalization by evaluation for martin-l{\"o}f type theory.
\newblock In {\em Proceedings of the 7th ACM SIGPLAN International Conference on Certified Programs and Proofs}, CPP 2018, pages 266--279, New York, NY, USA, 2018. Association for Computing Machinery.
\newblock \href {https://doi.org/10.1145/3167091} {\path{doi:10.1145/3167091}}.

\bibitem{Winterhalter2020}
Théo Winterhalter.
\newblock {\em Formalisation and meta-theory of type theory}.
\newblock PhD thesis, Université de Nantes, 2020.

\bibitem{Winterhalter2023}
Théo Winterhalter.
\newblock Composable partial functions in {Coq}, totally for free.
\newblock In {\em 29th International Conference on Types for Proofs and Programs}, 2023.

\bibitem{Wright1994}
Andrew Wright and Matthias Felleisen.
\newblock A syntactic approach to type soundness.
\newblock {\em Information and Computation}, 115(1):38--94, 1994.
\newblock URL: \url{https://www.sciencedirect.com/science/article/pii/S0890540184710935}, \href {https://doi.org/10.1006/inco.1994.1093} {\path{doi:10.1006/inco.1994.1093}}.

\end{thebibliography}

\appendix

% \section{Complete typing rules}
% \label{sec:complete-typing}

\section{Declarative Martin-Löf Type Theory}
\label{sec:complete-decl-tt}

On top of \(\sub{t}{u}\) for the substitution of the last variable of \(t\) by \(u\), we
more generally use \(\sub{t}{u,v,w}\) for the parallel
substitution of the last variables of \(t\) with \(u\), \(v\) and \(w\).

% \begin{comment}
\subsection{Declarative typing}
\label{app:decl-typing}

\begin{mathparpagebreakable}
  \jform{\ctxtyping{\Gamma}}[Context $\Gamma$ is well-formed]
  \inferdef{ }{\ctxtyping{\cdot}} \and
  \inferdef{\ctxtyping{\Gamma} \\ \typing{\Gamma}{A}[\univ]}
      {\ctxtyping{\Gamma, x : A}}
\end{mathparpagebreakable}
  
\begin{mathparpagebreakable}
  \jform{\typing{\Gamma}{\sigma}[\Delta]}[$\sigma$ is a well-typed substitution between contexts $\Gamma$ and $\Delta$]
  \inferdef{ }{\typing{\Gamma}{\cdot}[\cdot]} \and
  \inferdef{\typing{\Gamma}{\sigma}[\Delta] \\ \typing{\Gamma}{t}[\sub{A}{\sigma}]}
    {\typing{\Gamma}{(\sigma,t)}[\Delta,x : A]}
\end{mathparpagebreakable}
\begin{mathparpagebreakable}
  \jform{\typing{\Gamma}{T}}[Type $T$ is well-formed in context $\Gamma$]
  \inferdef[FunTy]
    {\typing{\Gamma}{A} \\ \typing{\Gamma, x : A}{B}}
    {\typing{\Gamma}{\P x : A . B}} \and
  \inferdef[SigTy]
    {\typing{\Gamma}{A} \\ \typing{\Gamma, x : A}{B}}
    {\typing{\Gamma}{\Sig x : A . B}} \and
  \inferdef[NatTy]
    {\ctxtyping{\Gamma}}
    {\typing{\Gamma}{\nat}} \and
  \inferdef[EmptyTy]
    {\ctxtyping{\Gamma}}
    {\typing{\Gamma}{\Empty}} \and
  \inferdef[IdTy]
    {\typing{\Gamma}{A}\\
      \typing{\Gamma}{a}[A]\\
      \typing{\Gamma}{a'}[A]}
    {\typing{\Gamma}{\Id(A,a,a')}} \and
  \inferdef[El]
    {\typing{\Gamma}{A}[\univ]}
    {\typing{\Gamma}{A}} \and
  \inferdef[UnivTy]
    {\ctxtyping{\Gamma}}
    {\typing{\Gamma}{\univ}}
\end{mathparpagebreakable}
\begin{mathparpagebreakable}
  \jform{\typing{\Gamma}{t}[T]}[Term $t$ has type $T$ under context $\Gamma$]
  \inferdef[Conv]
    {\typing{\Gamma}{t}[A] \\
      \conv{\Gamma}{A}{B}}
    {\typing{\Gamma}{t}[B]} \and
  \inferdef[Var]{
    \ctxtyping{\Gamma} \\ (x : A) \in \Gamma} 
    {\typing{\Gamma}{x}[A]} \and
  \inferdef[FunUni]
    {\typing{\Gamma}{A}[\univ] \\ \typing{\Gamma, x : A}{B}[\univ]}
    {\typing{\Gamma}{\P x : A . B}[\univ]} \and
  \inferdef[Abs]
    {\typing{\Gamma}{A} \\ \typing{\Gamma, x : A}{B} \\ \typing{\Gamma, x : A}{t}[B]}
    {\typing{\Gamma}{\l x : A . t}[\P x : A . B]}
    \and
  \inferdef[App]
    {\typing{\Gamma}{t}[\P x : A . B] \\ \typing{\Gamma}{u}[A]}
    {\typing{\Gamma}{t~u}[\sub{B}{u}]}
  \\
  \inferdef[SigUni]
    {\typing{\Gamma}{A}[\univ] \\ \typing{\Gamma, x : A}{B}[\univ]}
    {\typing{\Gamma}{\Sig x : A . B}[\univ]}
  \and
  \inferdef[Pair]
    {\typing{\Gamma}{t}[A] \\ \typing{\Gamma}{u}[\sub{B}{t}]}
    {\typing{\Gamma}{\pair[A.B]{t}{u}}[\Sig x : A . B]}
  \and
  \inferdef[Proj\(_1\)]
    {\typing{\Gamma}{p}[\Sig x : A . B]}
    {\typing{\Gamma}{\fst{p}}[A]}
  \and
  \inferdef[Proj\(_2\)]
    {\typing{\Gamma}{p}[\Sig x : A . B]}
    {\typing{\Gamma}{\snd{p}}[\sub{B}{\fst{p}}]}
  \\
  \inferdef[NatUni]
    {\ctxtyping{\Gamma}}
    {\typing{\Gamma}{\nat}[\univ]} \and
    \inferdef[Zero]
    {\ctxtyping{\Gamma}}
    {\typing{\Gamma}{0}[\nat]} \and
    \inferdef[Succ]
    {\typing{\Gamma}{n}[\nat]}
    {\typing{\Gamma}{\succ(n)}[\nat]} \and
    \inferdef[NatRec]{
      \typing{\Gamma}{n}[\nat] \\
      \typing{\Gamma,x : \nat}{P} \\
      \typing{\Gamma}{b_{0}}[\sub{P}{0}] \\
      \typing{\Gamma,x : \nat, y : \sub{P}{n}}{b_{\succ}}[\sub{P}{\succ(n)}]}
    {\typing{\Gamma}{\natelim{n}{x.P}{b_{0}}{x.y.b_{\succ}}}[\sub{P}{n}]} \and
  \inferdef[EmptyUni]
    {\ctxtyping{\Gamma}}
    {\typing{\Gamma}{\Empty}[\univ]} \and
  \inferdef[EmptyInd]{
    \typing{\Gamma}{e}[\Empty] \\
    \typing{\Gamma,x : \Empty}{P}
  }
  {\typing{\Gamma}{\eelim{e}{x.P}}[\sub{P}{e}]} \and
  % \jformhigh{\typing{\Gamma}{t}[T]}[Term $t$ has type $T$ in context $\Gamma$ (continued)]
  %
  \inferdef[IdUni]{
    \typing{\Gamma}{A}\\
    \typing{\Gamma}{a}[A]\\
    \typing{\Gamma}{a'}[A]}
  {\typing{\Gamma}{\Id(A,a,a')}} \and
  \inferdef[ReflTm]{
    \typing{\Gamma}{A} \\
    \typing{\Gamma}{a}[A]}
    {\typing{\Gamma}{\refl[A][a]}[\Id(A,a,a)]} \and
  \inferdef[IdInd]{
    \typing{\Gamma}{A} \\
    \typing{\Gamma}{a}[A] \\
    \typing{\Gamma}{a'}[A] \\
    \typing{\Gamma}{e}[\Id(A,a,a')] \\
    \typing{\Gamma,x : A, y : \Id(A,a,y)}{P} \\
    \typing{\Gamma}{b}[\sub{P}{a,\refl[A][a]}]}
    {\typing{\Gamma}{\ielim[A][a]{e}{x.y.P}{b}}[\sub{P}{a',e}]}
\end{mathparpagebreakable}

% \end{comment}

\subsection{Declarative conversion}
\label{app:decl-conv}

\begin{mathparpagebreakable}
  \jform{\conv{\Gamma}{T}{T'}}[Types $T$ and $T'$ are convertible in context $\Gamma$]
  \inferdef[ReflTy]
    {\typing{\Gamma}{A}}
    {\conv{\Gamma}{A}{A}} \and
  \inferdef[SymTy]
    {\conv{\Gamma}{A}{B}}
    {\conv{\Gamma}{B}{A}} \and
  \inferdef[TransTy]
    {\conv{\Gamma}{A}{B} \\
    \conv{\Gamma}{B}{C}}
    {\conv{\Gamma}{A}{C}} \and
  \inferdef[ElC]
    {\conv{\Gamma}{A}{A'}[\univ]}
    {\conv{\Gamma}{A}{A'}} \and
  \inferdef[FunTyC]
    {\conv{\Gamma}{A}{A'} \\ \conv{\Gamma, x : A}{B}{B'}}
    {\conv{\Gamma}{\P x : A . B}{\P x : A'.B'}} \and
  \inferdef[SigTyC]
    {\conv{\Gamma}{A}{A'} \\ \conv{\Gamma, x : A}{B}{B'}}
    {\conv{\Gamma}{\Sig x : A . B}{\Sig x : A'.B'}} \and
  \inferdef[IdTyC]
    {\conv{\Gamma}{A}{A'} \\
      \conv{\Gamma}{t}{t'}[A]\\
      \conv{\Gamma}{u}{u'}[A]}
    {\conv{\Gamma}{\Id(A,t,u)}{\Id(A',t',u')}}
\end{mathparpagebreakable}

\begin{mathparpagebreakable}
  \jform{\conv{\Gamma}{t}{t'}[T]}[Terms $t$ and $t'$ are convertible at
    type $T$ in context $\Gamma$]
  \inferdef[Refl]
    {\typing{\Gamma}{t}[A]}
    {\conv{\Gamma}{t}{t}[A]} \and
  \inferdef[Sym]
    {\conv{\Gamma}{t}{u}[A]}
    {\conv{\Gamma}{u}{t}[A]} \and
  \inferdef[Trans]
    {\conv{\Gamma}{t}{u}[A] \and
    \conv{\Gamma}{u}{v}[A]}
    {\conv{\Gamma}{t}{v}[A]} \and
    \inferdef[Conv]
    {\conv{\Gamma}{t}{t'}[A] \\
      \conv{\Gamma}{A}{B}}
    {\conv{\Gamma}{t}{t'}[B]} \and
  \inferdef[FunCong]
    {\conv{\Gamma}{A}{A'}[\univ] \\ \conv{\Gamma, x : A}{B}{B'}[\univ]}
    {\conv{\Gamma}{\P x : A . B}{\P x : A'. B'}[\univ]}
  \and
  \text{other congruences omitted} \\
  \inferdef[\(\beta\)Fun]
    {
      \typing{\Gamma}{A} \\ \typing{\Gamma, x : A}{B} \\\\
      \typing{\Gamma, x : A}{t}[B]\\ \typing{\Gamma}{u}[A]}
    {\conv{\Gamma}{(\l x : A.t)\ u}{\sub{t}{u}}[\sub{B}{u}]} \and
  \inferdef[\(\eta\)Fun]
      {\typing{\Gamma}{f}[\P x : A.B]}
      {\conv{\Gamma}{f}{\l x : A.f\ x}[\P x : A . B]}
        \\
  \inferdef[\(\beta\)Sig\(_1\)]
    {\typing{\Gamma}{A} \\ \typing{\Gamma, x : A}{B}\\\\
      \typing{\Gamma}{t}[A] \\
      \typing{\Gamma}{u}[\sub{B}{t}]}
    {\conv{\Gamma}{\fst{\pair[A.B]{t}{u}}}{t}[A]}
    % \label{infrule:mltt-sig-beta-fst}
  \and
  \inferdef[\(\beta\)Sig\(_2\)]
    {\typing{\Gamma}{A} \\ \typing{\Gamma, x : A}{B}\\\\
      \typing{\Gamma}{t}[A] \\
      \typing{\Gamma}{u}[\sub{B}{t}]}
    {\conv{\Gamma}{\snd{\pair[A.B]{t}{u}}}{u}[\sub{B}{t}]}
    % \label{infrule:mltt-sig-beta-snd}
  \and
  \inferdef[\(\eta\)Sig]
    {\typing{\Gamma}{A} \\ \typing{\Gamma, x : A}{B}\\
      \typing{\Gamma}{p}[\Sig x : A . B]}
    {\conv{\Gamma}{p}{\pair[A.B]{\fst{p}}{\snd{p}}}[\Sig x : A . B]}
    % \label{infrule:mltt-sig-eta}
  \and
  \inferdef[\(\beta\)Zero]{
    \typing{\Gamma,x : \nat}{P} \\
    \typing{\Gamma}{b_{0}}[\sub{P}{0}] \\
    \typing{\Gamma,x : \nat, y : \sub{P}{x}}{b_{\succ}}[\sub{P}{\succ(x)}]}
  {\conv{\Gamma}{\natelim{0}{x.P}{b_{0}}{x.y.b_{\succ}}}{b_{0}}[\sub{P}{0}]}
  % \label{infrule:mltt-beta-nil}
  \and
  \inferdef[\(\beta\)Succ]{
    \typing{\Gamma}{n}[\nat] \\
    \typing{\Gamma,x : \nat}{P} \\
    \typing{\Gamma}{b_{0}}[\sub{P}{0}] \\
    \typing{\Gamma,x : \nat, y : \sub{P}{x}}{b_{\succ}}[\sub{P}{\succ(x)}]}
  {\conv{\Gamma}{\natelim{\succ(n)}{x.P}{b_{0}}{x.y.b_{\succ}}}{
    \sub{b_{\succ}}{n,\natelim{n}{x.P}{b_{0}}{x.y.b_{\succ}}}}[\sub{P}{\succ(n)}]}
  % \label{infrule:mltt-beta-cons}
  \and
  \inferdef[\(\beta\)Refl]{
    \typing{\Gamma}{A} \\
    \typing{\Gamma}{a}[A] \\
    \typing{\Gamma,x : A, y : \Id(A,a,x)}{P} \\
    \typing{\Gamma}{b}[\sub{P}{a,\refl[A][a]}]}
  {\conv{\Gamma}{\ielim[A][a]{\refl[A][a]}{x.y.P}{b}}
    {b}[\sub{P}{a,\refl[A][a]}]}
\end{mathparpagebreakable}

\subsection{Neutral comparison}
\label{app:neutral-comp}

\begin{mathparpagebreakable}
  \jform{\nconv{\Gamma}{n}{n'}{T}}[
    Neutrals \(n\) and \(n'\) are convertible at type \(T\)]
  \inferdef[NConv]
    {\nconv{\Gamma}{n}{n'}{T} \\ \conv{\Gamma}{T}{S}}
    {\nconv{\Gamma}{n}{n'}{S}} \and
  \inferdef[NVar]
    {(x : T) \in \Gamma}
    {\nconv{\Gamma}{x}{x}{T}}
  \and
  \inferdef[NApp]
    {\nconv{\Gamma}{n}{n'}{\P x : A . B} \\ \conv{\Gamma}{u}{u'}[A]}
    {\nconv{\Gamma}{n\ u}{n'\ u'}{\sub{B}{u}}}
  \and
  \inferdef[NSig\(_{1}\)]
    {\nconv{\Gamma}{n}{n'}{\Sig x : A.B}}
    {\nconv{\Gamma}{\fst{n}}{\fst{n'}}{A}}
  \and
  \inferdef[NSig\(_{2}\)]
    {\nconv{\Gamma}{n}{n'}{\Sig x : A.B}}
    {\nconv{\Gamma}{\snd{n}}{\snd{n'}}{\sub{B}{\fst{n}}}}
  \and
  \inferdef[NNatElim]{
    \nconv{\Gamma}{n}{n'}{\nat} \\
    \conv{\Gamma,x : \nat}{P}{P'} \\
    \conv{\Gamma}{b_{0}}{b'_{0}}[\sub{P}{0}] \\
    \conv{\Gamma,x : \nat, y : \sub{P}{n}}{b_{\succ}}{b'_{\succ}}[\sub{P}{\succ(n)}]}
    {\nconv{\Gamma}{\natelim{n}{x.P}{b_{0}}{x.y.b_{\succ}}}{\natelim{n'}{x.P'}{b'_{0}}{x.y.b'_{\succ}}}{\sub{P}{n}}} \and
    \inferdef[NEmptyElim]{
      \nconv{\Gamma}{n}{n'}{\Empty} \\
      \conv{\Gamma,x : \Empty}{P}{P'}
    }{
      \nconv{\Gamma}{\eelim{n}{x.P}}{\eelim{n'}{x.P'}}{\sub{P}{s}}
    } \and
    \inferdef[NIdInd]{
      \nconv{\Gamma}{n}{n'}{\Id(A'',a'',b'')} \\
      \conv{\Gamma,x : A, y : \Id(A,a,x)}{P}{P'} \\
      \conv{\Gamma}{h}{h'}[\sub{P}{a,\refl[A][a]}]}
      {\nconv{\Gamma}{\ielim[A][a]{n}{x.y.P}{h}}
        {\ielim[A'][a']{n'}{x.y.P'}{h'}}{\sub{P}{b''}}}
\end{mathparpagebreakable}

% \begin{comment}
\subsection{Deep normalisation}
\label{sec:deep-norm}

\begin{mathparpagebreakable}
  \jform{\dnf{\Gamma}{T}}[Type \(T\) is deeply normalising]
  \inferdef
  {T \red U \\
    \dnf*{\Gamma}{U}}
  {\dnf{\Gamma}{T}}
  \\
  \jform{\dnf{\Gamma}{t}[A]}[Term \(t\) is deeply normalising at type \(A\)]
  \inferdef
  {T \red U \\ t \red u \\
    \dnf*{\Gamma}{u}[U]}
  {\dnf{\Gamma}{t}[T]}
\end{mathparpagebreakable}

\begin{mathparpagebreakable}
  \jform{\dnf*{\Gamma}{T}}[Type \(T\) is a deeply normalising normal type]
  \inferdef
    {\dnf{\Gamma}{A} \\
    \dnf{\Gamma, x : A}{B}}
    {\dnf*{\Gamma}{\P x : A . B}}
  \and
  \inferdef
    {\dnf{\Gamma}{A} \\
    \dnf{\Gamma, x : A}{B}}
    {\dnf*{\Gamma}{\Sig x : A . B}}
  \and
  \inferdef
    { }
    {\dnf*{\Gamma}{\nat}}
  \and
  \inferdef
    { }
    {\dnf*{\Gamma}{\Empty}}
  \and
  \inferdef
    {\dnf{\Gamma}{A} \\
    \dnf{\Gamma}{t}[A] \\
    \dnf{\Gamma}{u}[A]}
    {\dnf*{\Gamma}{\Id(A,t,u)}} \and
  \inferdef
    { }
    {\dnf*{\Gamma}{\univ}}
  \and
  \inferdef
    {\dne{\Gamma}{n}{T}}
    {\dnf*{\Gamma}{n}}
\end{mathparpagebreakable}
\begin{mathparpagebreakable}
  \jform{\dnf*{\Gamma}{t}[A]}[
    Term \(t\) is a deeply normalising weak-head normal form at type \(A\)]
  \inferdef
  {\dnf{\Gamma}{A}[\univ] \\
    \dnf{\Gamma, x : A'}{B}[\univ]}
  {\dnf*{\Gamma}{\P x : A . B}[\univ]}
  \and 
  \inferdef
  {\dnf{\Gamma, x : A}{f~x}[B]}
  {\dnf*{\Gamma}{f}[\P x : A.B]}
  \and
  \inferdef
  {\dnf{\Gamma}{A}[\univ] \\
    \dnf{\Gamma, x : A'}{B}[\univ]}
  {\dnf*{\Gamma}{\Sig x : A . B}[\univ]}
  \and 
  \inferdef
  {\dnf{\Gamma}{\fst{p}}[A] \\
  \dnf{\Gamma}{\snd{p}}[\sub{B}{\fst{p}}]}
  {\dnf*{\Gamma}{p}[\Sig x : A.B]}
  \and
  \inferdef
    { }
    {\dnf*{\Gamma}{\nat}[\univ]}
  \and
  \inferdef
    { }
    {\dnf*{\Gamma}{0}[\nat]} \and
  \inferdef
    { \dnf{\Gamma}{t}[\nat]}
    {\dnf*{\Gamma}{\succ(t)}[\nat]} \and
  \inferdef
  { }
  {\dnf*{\Gamma}{\Empty}[\univ]} \and
  \inferdef
    {\dnf{\Gamma}{A}[\univ] \\
    \dnf{\Gamma}{t}[A] \\
    \dnf{\Gamma}{u}[A]}
    {\dnf*{\Gamma}{\Id(A,t,u)}[\univ]} \and
  \inferdef
    { }
    {\dnf{\Gamma}{\refl[A][a]}[\Id(A'',t,u)]} \and
  \inferdef
    {\dne{\Gamma}{n}{S} \\ \isPos T}
    {\dnf*{\Gamma}{n}[T]}
\end{mathparpagebreakable}

\begin{mathparpagebreakable}
  \jform{\dne*{\Gamma}{t}{T}}[
    Term \(t\) is a deeply normalising neutral at reduced type \(T\)]
  \inferdef
    {\dne{\Gamma}{n}{T} \\ T \red S}
    {\dne*{\Gamma}{n}{S}} \\
  \jform{\dne{\Gamma}{t}{T}}[
    Term \(t\) is a deeply normalising neutral at type \(T\)]
  \inferdef
    {(x : T) \in \Gamma}
    {\dne{\Gamma}{x}{T}}
  \and
  \inferdef
    {\dne*{\Gamma}{n}{\P x : A . B} \\ \dnf{\Gamma}{u}[A]}
    {\dne{\Gamma}{n~u}{\sub{B}{u}}}
  \and
  \inferdef
    {\dne*{\Gamma}{n}{\Sig x : A.B}}
    {\dne{\Gamma}{\fst{n}}{A}}
  \and
  \inferdef
    {\dne*{\Gamma}{n}{\Sig x : A.B}}
    {\dne{\Gamma}{\snd{n}}{\sub{B}{\fst{n}}}}
  \and
  \inferdef{
    \dne{\Gamma}{n}{\nat} \\
    \dnf{\Gamma,x : \nat}{P} \\
    \dnf{\Gamma}{b_{0}}[\sub{P}{0}] \\
    \dnf{\Gamma,x : \nat, y : \sub{P}{n}}{b_{\succ}}[\sub{P}{\succ(n)}]}
    {\dne{\Gamma}{\natelim{n}{x.P}{b_{0}}{x.y.b_{\succ}}}{\sub{P}{n}}} \and
    \inferdef{
      \dne*{\Gamma}{n}{\Empty} \\
      \dnf{\Gamma,x : \Empty}{P}
    }{
      \dne{\Gamma}{\eelim{n}{x.P}}{\sub{P}{s}}
    } \and
    \inferdef{
      \dne*{\Gamma}{n}{\Id(A'',a'',b'')} \\
      \dnf{\Gamma,x : A, y : \Id(A,a,x)}{P} \\
      \dnf{\Gamma}{h}[\sub{P}{a,\refl[A][a]}]}
      {\dne{\Gamma}{\ielim[A][a]{n}{x.y.P}{h}}{\sub{P}{b''}}}
\end{mathparpagebreakable}

% \end{comment}

\section{Algorithmic Martin-Löf Type Theory}
\label{sec:complete-algo-tt}

% \begin{comment}
\subsection{Reduction}
\label{app:reduction}

\begin{mathparpagebreakable}
  \jform{t \ored t'}[Term $t$ weak-head reduces in one step to term $t'$]
  \inferdef[\(\beta\)Fun]
    { }
    {(\l x : A . t)\ u \ored \sub{t}{u}}
  \and
  \inferdef[\(\beta\)Sig\(_1\)]
    { }
    {\fst{\pair[A.B]{t}{u}} \ored t}
  \and
  \inferdef[\(\beta\)Sig\(_2\)]
    { }
    {\snd{\pair[A.B]{t}{u}} \ored u}
  \and
  \inferdef[\(\beta\)Zero]{ }{
    \natelim{0}{x.P}{b_{0}}{x.y.b_{\succ}} \ored b_{0}}
  % \label{infrule:mltt-beta-nil}
  \and
  \inferdef[\(\beta\)Succ]{ }{
    \natelim{\succ(n)}{x.P}{b_{0}}{x.y.b_{\succ}} \ored
    \sub{b_{\succ}}{n,\natelim{n}{x.P}{b_{0}}{x.y.b_{\succ}}}}
  \and
  \inferdef[\(\beta\)Refl]{ }
    {\ielim[A][a]{\refl[A][a]}{x.y.P}{b} \ored b}
  \\
  \inferdef[AppRed]
    {t \ored t'}
    {t~u \ored t'~u}
  \and
  \inferdef[AppFst]
    {t \ored t'}
    {\fst{t} \ored \fst{t'}}
  \and
  \inferdef[AppSnd]
    {t \ored t'}
    {\snd{t} \ored \snd{t'}}
  \and
  \inferdef[RedNatElim]
    {t \ored t'}
    {\natelim{t}{x.P}{b_{0}}{x.y.b_{\succ}}
      \ored \natelim{t'}{x.P}{b_{0}}{x.y.b_{\succ}}}
  \and
  \inferdef[RedIdElim]
  {t \ored t'}
  {\ielim[A][a]{t}{x.y.P}{b} \ored \ielim[A][a]{t'}{x.y.P}{b}}
\end{mathparpagebreakable}

\begin{mathpar}
  \jform{t \red t'}[Term $t$ weak-head reduces in multiple steps to term $t'$]
  \inferdef[RedBase]
  { }
  {t \red t}
  \and
  \inferdef[RedStep]
  {t \ored t' \\ t' \red t''}
  {t \red t''}
\end{mathpar}

% \end{comment}

% % \begin{comment}
\subsection{Normal forms}
\label{app:nf}

This BNF-style definitions of predicates on terms
should be read as \eg\ ``\(\can f\) is the predicate corresponding
to the sub-grammar of terms given by...'' The letter \(t\) stands for arbitrary terms,
and \(n\) is for neutrals.

\[
  \begin{array}{l c l r}
    \boxed{\can f} & \eqdef& \univ \mid \P x : t. t \mid  \l x : t. t \mid \nat \mid 0 \mid \succ(t) \mid
    & \text{\textcolor{gray}{weak-head canonical forms}}\\
    &&  \Sig x : t.t \mid \pair{t}{t} \mid \Empty \mid
    \Id(t,t,t) \mid \refl[t][t] \\\\
    \boxed{\neu n} & \eqdef& x \mid n\ t \mid \fst{n} \mid \snd{n} \mid
      \natelim{n}{x.t}{t}{x.y.t} \mid & \\
    && \eelim{n}{x.t} \mid \ielim[t][t]{n}{x.y.t}{t} & \text{\textcolor{gray}{weak-head neutrals}}\\\\
    \boxed{\nf f} & \eqdef & \can f \vee \neu f & \text{\textcolor{gray}{weak-head normal forms}}
    \\\\
    \boxed{\isTy f} & \eqdef & \univ \mid \P x : t. t \mid \Sig x : t.t \mid \nat \mid \Empty \mid \Id(t,t,t)
      \mid n & \text{\textcolor{gray}{Types in weak-head normal form}} \\\\
    \boxed{\isPos f} & \eqdef & \univ \mid  \nat \mid \Empty \mid \Id(t,t,t)
      \mid n & \text{\textcolor{gray}{Positive types in weak-head normal form}} \\\\
    \boxed{\isNat f} & \eqdef & 0 \mid \succ(t) \mid n &
      \text{\textcolor{gray}{Natural numbers in weak-head normal form}} \\\\
    \boxed{\isFun f} & \eqdef & \l x : t.t \mid n &
      \text{\textcolor{gray}{Natural numbers in weak-head normal form}} \\\\
    \boxed{\operatorname{isPair} f} & \eqdef & \pair{t}{t} \mid n &
      \text{\textcolor{gray}{Pair in weak-head normal form}} \\\\
    \boxed{\operatorname{isId} f} & \eqdef & \refl[t][t] \mid n &
      \text{\textcolor{gray}{Identity term in weak-head normal form}}
  \end{array}
\]

% \end{comment}

% % \begin{comment}
\subsection{Bidirectional typing}
\label{app:bidir-typing}

\begin{mathparpagebreakable}
  \jform{\checkty{\Gamma}{T}}[\(T\) is a type in \(\Gamma\)]
  \inferdef[Sort]{ }
    {\checkty{\Gamma}{\univ}}
    \and
  \inferdef[FunTy]
    {\checkty{\Gamma}{A} \\ \checkty{\Gamma, x : A}{B}}
    {\checkty{\Gamma}{\P x : A . B}} \and
  \inferdef[SigTy]
    {\checkty{\Gamma}{A} \\ \checkty{\Gamma, x : A}{B}}
    {\checkty{\Gamma}{\Sig x : A . B}}
  \and
  \inferdef[NatTy]
    { }
    {\checkty{\Gamma}{\nat}} \and
  \inferdef[EmptyTy]
    { }
    {\checkty{\Gamma}{\Empty}} \and
  \inferdef[IdTy]
    {\checkty{\Gamma}{A}\\
      \checkty{\Gamma}{a}[A]\\
      \checkty{\Gamma}{a'}[A]}
    {\checkty{\Gamma}{\Id(A,a,a')}} \and
  \inferdef[El]
    {\inferty*{\Gamma}{A}{\univ}}
    {\checkty{\Gamma}{A}}
\end{mathparpagebreakable}

\begin{mathparpagebreakable}
  \jform{\inferty{\Gamma}{t}{T}}[Term \(t\) infers type \(T\) in context \(\Gamma\)]
  \inferdef[Var]
    {(x : T) \in \Gamma}
    {\inferty{\Gamma}{x}{T}}
  \and
  \inferdef[Fun]
    {\inferty*{\Gamma}{A}{\univ} \\
      \inferty*{\Gamma, x : A}{B}[\univ]}
    {\inferty{\Gamma}{\P x : A . B}{\univ}}
  \and
  \inferdef[Abs]
    {\checkty{\Gamma}{A} \\ \inferty{\Gamma, x : A}{t}{B}}
    {\inferty{\Gamma}{\l x : A . t}{\P x : A . B}}
  \and
  \inferdef[App]
    {\inferty*{\Gamma}{t}{\P x : A . B} \\ \checkty{\Gamma}{u}[A]}
    {\inferty{\Gamma}{t~u}{\sub{B}{u}}}
  \and
  \inferdef[SigUniv]
  {\inferty*{\Gamma}{A}{\univ} \\
    \inferty*{\Gamma, x : A}{B}[\univ]}
  {\inferty{\Gamma}{\Sig x : A . B}{\univ}}
  \and
  \inferdef[Pair]
  {
    \inferty{\Gamma}{t}{A} \\
    \checkty{\Gamma, x : A}{B} \\
    \checkty{\Gamma}{u}[\sub{B}{t}]}
  {\inferty{\Gamma}{\pair[A.B]{t}{u}}{\Sig x : A . B}}
  \and
  \inferdef[Proj$_1$]
    {\inferty*{\Gamma}{p}{\Sig x : A . B}}
    {\inferty{\Gamma}{\fst{p}}{A}}
  \and
  \inferdef[Proj$_2$]
    {\inferty{\Gamma}{p}{\Sig x : A . B}}
    {\inferty{\Gamma}{\snd{p}}{\sub{B}{\fst{p}}}}
  \\
  \inferdef[NatUniv]
    { }
    {\inferty{\Gamma}{\nat}{\univ}}
  \and
  \inferdef[Zero]
    { }
    {\inferty{\Gamma}{0}{\nat}} \and
  \inferdef[Succ]
    {\checkty{\Gamma}{n}{\nat}}
    {\inferty{\Gamma}{\succ(n)}{\nat}} \and
  \inferdef[NatRec]{
    \checkty{\Gamma}{n}[\nat] \\
    \checkty{\Gamma,x : \nat}{P} \\
    \checkty{\Gamma}{b_{0}}[\sub{P}{0}] \\
    \checkty{\Gamma,x : \nat, y : \sub{P}{n}}{b_{\succ}}[\sub{P}{\succ(n)}]}
    {\inferty{\Gamma}{\natelim{n}{x.P}{b_{0}}{x.y.b_{\succ}}}[\sub{P}{n}]} \and
  \inferdef[Empty]
    { }
    {\inferty{\Gamma}{\Empty}{\univ}} \and
  \inferdef[EmptyInd]{
    \checkty{\Gamma}{s}[\Empty] \\
    \checkty{\Gamma,x : \Empty}{P}
  }
  {\inferty{\Gamma}{\eelim{s}{x.P}}{\sub{P}{s}}} \and
  \inferdef[IdTy]
  {\inferty*{\Gamma}{A}{\univ}\\
    \checkty{\Gamma}{a}[A]\\
    \checkty{\Gamma}{a'}[A]}
  {\inferty{\Gamma}{\Id(A,a,a')}{\univ}} \and
  \inferdef[ReflTm]
  {\checkty{\Gamma}{A} \\
    \checkty{\Gamma}{a}[A]}
  {\inferty{\Gamma}{\refl[A][a]}{\Id(A,a,a)}} \and
  \inferdef[IdInd]{
    \checkty{\Gamma}{A} \\
    \inferty*{\Gamma}{s}{\Id(A',a,a')} \\
%    \bconv{\Gamma}{A}{A'} \\
    \checkty{\Gamma,x : A, y : \Id(A,a,x)}{P} \\
    \checkty{\Gamma}{b}[\sub{P}{a,\refl[A][x]}]}
  {\inferty{\Gamma}{\ielim[A][a]{s}{x.y.P}{b}}{\sub{P}{a,s}}} \\
  \jform{\checkty{\Gamma}{t}{T}}[Term \(t\) checks against type \(T\)]\\
  \inferdef[Check]
    {\inferty{\Gamma}{t}{T'} \\ \bconv{\Gamma}{T'}{T}}
    {\checkty{\Gamma}{t}{T}}
  \\
  % \inferdef{CheckTy}
  % {\inferty*{\Gamma}{T}{\uni[i]}}
  % {\checkty{\Gamma}{T}}
  % \and
  \jform{\inferty*{\Gamma}{t}{T}}[Term \(t\) infers the reduced type \(T\)]
  \inferdef[InfRed]
    {\inferty{\Gamma}{t}{T} \\ T \red T'}
    {\inferty*{\Gamma}{t}{T'}}
\end{mathparpagebreakable}

% \end{comment}

\subsection{Typed algorithmic conversion}
\label{app:algo-typed-conv}

\begin{mathparpagebreakable}
  \jform{\bconv{\Gamma}{T}{T'}}[Types \(T\) and \(T'\) are convertible]
  \inferdef[TyRed]
  {T \red U \\ T' \red U' \\
    \bconv*{\Gamma}{U}{U'}}
  {\bconv{\Gamma}{T}{T'}}
  \\
  \jform{\bconv{\Gamma}{t}{t'}[A]}[Terms \(t\) and \(t'\) are convertible at type \(A\)]
  \inferdef[TmRed]
  {T \red U \\ t \red u \\ t' \red u' \\
    \bconv*{\Gamma}{u}{u'}[U]}
  {\bconv{\Gamma}{t}{t'}[T]}
\end{mathparpagebreakable}

\begin{mathparpagebreakable}
  \jform{\bconv*{\Gamma}{T}{T'}}[Reduced types \(T\) and \(T'\) are convertible]
  \inferdef[CProdTy]
    {\bconv{\Gamma}{A}{A'} \\
    \bconv{\Gamma, x : A'}{B}{B'}}
    {\bconv*{\Gamma}{\P x : A . B}{\P x : A'. B'}}
  \and
  \inferdef[CSigTy]
    {\bconv{\Gamma}{A}{A'} \\
    \bconv{\Gamma, x : A}{B}{B'}}
    {\bconv*{\Gamma}{\Sig x : A . B}{\Sig x : A'. B'}}
  \and
  \inferdef[CNatTy]
    { }
    {\bconv*{\Gamma}{\nat}{\nat}}
  \and
  \inferdef[CreflTy]
    { }
    {\bconv*{\Gamma}{\Empty}{\Empty}}
  \and
  \inferdef[CIdTy]
    {\bconv{\Gamma}{A}{A'} \\
    \bconv{\Gamma}{t}{t'}[A] \\
    \bconv{\Gamma}{u}{u'}[A]}
    {\bconv*{\Gamma}{\Id(A,t,u)}{\Id(A',t',u')}} \and
  \inferdef[CUniTy]
    { }
    {\bconv*{\Gamma}{\univ}{\univ}}
  \and
  \inferdef[NeuTy]
    {\bnconv{\Gamma}{n}{n'}{T}}
    {\bconv*{\Gamma}{n}{n'}} \\

  \jform{\bconv*{\Gamma}{t}{t'}[A]}[
    Reduced terms \(t\) and \(t'\) are convertible at type \(A\)]
  %
  % \inferdef[CUni]
  % { }
  % {\bconv*{\Gamma}{\uni[i]}{\uni[j]}[\uni[k]]}
  % \and
  \inferdef[CFun]
  {\bconv{\Gamma}{A}{A'}[\univ] \\
    \bconv{\Gamma, x : A'}{B}{B'}[\univ]}
  {\bconv*{\Gamma}{\P x : A . B}{\P x : A'. B'}[\univ]}
  \and 
  \inferdef[CFunEta]
  {\bconv{\Gamma, x : A}{f~x}{f'~x}[B]}
  {\bconv*{\Gamma}{f}{f'}[\P x : A.B]}
  \and
  \inferdef[CSig]
  {\bconv{\Gamma}{A}{A'}[\univ] \\
    \bconv{\Gamma, x : A'}{B}{B'}[\univ]}
  {\bconv*{\Gamma}{\Sig x : A . B}{\Sig x : A'. B'}[\univ]}
  \and 
  \inferdef[CSigEta]
  {\bconv{\Gamma}{\fst{p}}{\fst{p'}}[A] \\\\
  \bconv{\Gamma}{\snd{p}}{\snd{p'}}[\sub{B}{\fst{p}}]}
  {\bconv*{\Gamma}{p}{p'}[\Sig x : A.B]}
  \and
  \inferdef[CNat]
    { }
    {\bconv*{\Gamma}{\nat}{\nat}[\univ]}
  \and
  \inferdef[CZero]
    { }
    {\bconv*{\Gamma}{0}{0}[\nat]} \and
  \inferdef[CSucc]
    { \bconv{\Gamma}{t}{t'}[\nat]}
    {\bconv*{\Gamma}{\succ(t)}{\succ(t')}[\nat]} \and
  \inferdef[CEmpty]
  { }
  {\bconv*{\Gamma}{\Empty}{\Empty}[\univ]} \and
  \inferdef[CId]
    {\bconv{\Gamma}{A}{A'}[\univ] \\
    \bconv{\Gamma}{t}{t'}[A] \\
    \bconv{\Gamma}{u}{u'}[A]}
    {\bconv*{\Gamma}{\Id(A,t,u)}{\Id(A',t',u')}[\univ]} \and
  \inferdef[ReflRefl]
    { }
    {\bconv{\Gamma}{\refl[A][a]}{\refl[A'][a']}[\Id(A'',t,u)]} \and
  % \inferdef[NeuNeu]
  %   {\bnconv{\Gamma}{n}{n'}{S} \\ \neu{M}}
  %   {\bconv*{\Gamma}{n}{n'}[M]}
  % \and
  \inferdef[NeuPos]
    {\bnconv{\Gamma}{n}{n'}{S} \\ \isPos T}
    {\bconv*{\Gamma}{n}{n'}[T]}
\end{mathparpagebreakable}

\begin{mathparpagebreakable}
  \jform{\bnconv*{\Gamma}{t}{t'}{T}}[
    Neutrals \(t\) and \(t'\) are comparable, inferring the reduced type \(T\)]
  \inferdef[NRed]
    {\bnconv{\Gamma}{n}{n'}{T} \\ T \red S}
    {\bnconv*{\Gamma}{n}{n'}{S}} \\
  \jform{\bnconv{\Gamma}{t}{t'}{T}}[
    Neutrals \(t\) and \(t'\) are comparable, inferring the type \(T\)]
  \inferdef[NVar]
    {(x : T) \in \Gamma}
    {\bnconv{\Gamma}{x}{x}{T}}
  \and
  \inferdef[NApp]
    {\bnconv*{\Gamma}{n}{n'}{\P x : A . B} \\ \bconv{\Gamma}{u}{u'}[A]}
    {\bnconv{\Gamma}{n~u}{n'~u'}{\sub{B}{u}}}
  \and
  \inferdef[NSig\(_{1}\)]
    {\bnconv*{\Gamma}{n}{n'}{\Sig x : A.B}}
    {\bnconv{\Gamma}{\fst{n}}{\fst{n'}}{A}}
  \and
  \inferdef[NSig\(_{2}\)]
    {\bnconv*{\Gamma}{n}{n'}{\Sig x : A.B}}
    {\bnconv{\Gamma}{\snd{n}}{\snd{n'}}{\sub{B}{\fst{n}}}}
  \and
  \inferdef[NNatElim]{
    \bnconv{\Gamma}{n}{n'}{\nat} \\
    \bconv{\Gamma,x : \nat}{P}{P'} \\
    \bconv{\Gamma}{b_{0}}{b'_{0}}[\sub{P}{0}] \\
    \bconv{\Gamma,x : \nat, y : \sub{P}{n}}{b_{\succ}}{b'_{\succ}}[\sub{P}{\succ(n)}]}
    {\bnconv{\Gamma}{\natelim{n}{x.P}{b_{0}}{x.y.b_{\succ}}}{\natelim{n'}{x.P'}{b'_{0}}{x.y.b'_{\succ}}}{\sub{P}{n}}} \and
    \inferdef[NEmptyElim]{
      \bnconv*{\Gamma}{n}{n'}{\Empty} \\
      \bconv{\Gamma,x : \Empty}{P}{P'}
    }{
      \bnconv{\Gamma}{\eelim{n}{x.P}}{\eelim{n'}{x.P'}}{\sub{P}{s}}
    } \and
    \inferdef[NIdInd]{
      \bnconv*{\Gamma}{n}{n'}{\Id(A'',a'',b'')} \\
      \bconv{\Gamma,x : A, y : \Id(A,a,x)}{P}{P'} \\
      \bconv{\Gamma}{h}{h'}[\sub{P}{a,\refl[A][a]}]}
      {\bnconv{\Gamma}{\ielim[A][a]{n}{x.y.P}{h}}
        {\ielim[A'][a']{h'}{x.y.z.P'}{x.h'}}{\sub{P}{b''}}}
\end{mathparpagebreakable}

\subsection{Untyped algorithmic conversion}
\label{app:algo-untyped-conv}

\begin{mathparpagebreakable}
\jform{\uconv{t}{t'}}[Terms \(t\) and \(t'\) are convertible]
\inferdef[TmRed]
{t \red u \\ t' \red u' \\
  \uconv*{u}{u'}}
{\uconv{t}{t'}}

\jform{\uconv*{t}{t'}}[
  Reduced terms \(t\) and \(t'\) are convertible]

\inferdef[CUni]
  { }
  {\uconv*{\univ}{\univ}}
\and
\inferdef[CFun]
  {\uconv{A}{A'} \\ \uconv{B}{B'}}
  {\uconv*{\P x : A . B}{\P x : A'. B'}}
\and
\inferdef[CLam]
  {\uconv{t}{t'}}
  {\uconv*{\l x : A. t}{\l x : A'. t'}}
\and
\inferdef[CLamNe]
  {\neu n' \\ \uconv{t}{n'\ x}}
  {\uconv*{\l x : A. t}{n'}}
\and
\inferdef[CNeLam]
  {\neu n \\ \uconv{n\ x}{n'}}
  {\uconv*{n}{\l x : A'. t'}}
\and
\inferdef[CSig]
  {\uconv{A}{A'} \\
    \uconv{B}{B'}}
  {\uconv*{\Sig x : A . B}{\Sig x : A'. B'}}
\and
\inferdef[CPair]
  {\uconv{p}{p'} \\ \uconv{q}{q'}}
  {\uconv*{\pair{p}{q}}{\pair{p'}{q'}}}
\and
\inferdef[CPairNe]
  {\uconv{p}{\fst{n'}} \\ \uconv{q}{\snd{n'}}}
  {\uconv*{\pair{p}{q}}{n'}}
\and 
\inferdef[CNePair]
  {\uconv{\fst{n}}{p'} \\ \uconv{\snd{n}}{q'}}
  {\uconv*{n}{\pair{p'}{q'}}}
\and
\inferdef[CNat]
  { }
  {\uconv*{\nat}{\nat}}
\and
\inferdef[CZero]
  { }
  {\uconv*{0}{0}} \and
\inferdef[CSucc]
  {\uconv{t}{t'}}
  {\uconv*{\succ(t)}{\succ(t')}} \and
\inferdef[CEmpty]
  { }
  {\uconv*{\Empty}{\Empty}} \and
\inferdef[CId]
  {\uconv{A}{A'} \\
  \uconv{t}{t'} \\
  \uconv{u}{u'}}
  {\uconv*{\Id(A,t,u)}{\Id(A',t',u')}} \and
\inferdef[ReflRefl]
  { }
  {\uconv{\refl[A][a]}{\refl[A'][a']}} \and
\inferdef[NeuNeu]
  {\unconv{n}{n'}}
  {\uconv*{n}{n'}} \and
\jform{\unconv{n}{n'}}[
  Neutrals \(n\) and \(n'\) are comparable]
\inferdef[NVar]
  { }
  {\unconv{x}{x}}
\and
\inferdef[NApp]
  {\unconv{n}{n'} \\ \uconv{u}{u'}}
  {\unconv{n\ u}{n'\ u'}}
\and
\inferdef[NSig\(_{1}\)]
  {\unconv{n}{n'}}
  {\unconv{\fst{n}}{\fst{n'}}}
\and
\inferdef[NSig\(_{2}\)]
  {\unconv{n}{n'}}
  {\unconv{\snd{n}}{\snd{n'}}}
\and
\inferdef[NNatElim]{
  \unconv{n}{n'} \\
  \uconv{P}{P'} \\
  \uconv{b_{0}}{b'_{0}} \\
  \uconv{b_{\succ}}{b'_{\succ}}}
  {\unconv{\natelim{n}{x.P}{b_{0}}{x.y.b_{\succ}}}{\natelim{n'}{x.P'}{b'_{0}}{x.y.b'_{\succ}}}} \and
  \inferdef[NEmptyElim]{
    \unconv{n}{n'} \\
    \uconv{P}{P'}
  }{
    \unconv{\eelim{n}{x.P}}{\eelim{n'}{x.P'}}
  } \and
  \inferdef[NIdInd]{
    \unconv*{n}{n'} \\
    \uconv{P}{P'} \\
    \uconv{h}{h'}}
    {\unconv{\ielim[A][a]{n}{x.y.P}{h}}
      {\ielim[A'][a']{h'}{x.y.z.P'}{x.h'}}}
\end{mathparpagebreakable}

\end{document}